\def\eqref#1{equation~\ref{#1}}
\def\1{\bm{1}}
\DeclareMathAlphabet{\mathsfit}{\encodingdefault}{\sfdefault}{m}{sl}
\SetMathAlphabet{\mathsfit}{bold}{\encodingdefault}{\sfdefault}{bx}{n}
\definecolor{lightgray}{RGB}{215,215,215}
\definecolor{lightblue}{RGB}{107,174,214}
\definecolor{bluu}{HTML}{ECF4FF}
\definecolor{blu}{RGB}{158,202,225}
\definecolor{myorange}{RGB}{2, 142, 2}
\useunder{\uline}{\ul}{}
\newcommand{\ie}{\emph{i.e., }}
\newcommand{\eg}{\emph{e.g., }}
\newcommand{\cf}{\emph{cf. }}
\newtheorem{theorem}{Theorem}
\newtheorem{lemma}{Lemma}
\newtheorem{deff}{Definition}
\title{Efficient Inference for Large Language Model-based Generative Recommendation}
\author{%
  Xinyu Lin$^{1*}$\, Chaoqun Yang$^2$\thanks{Equal contributions.} \, Wenjie Wang$^{3}$\thanks{Corresponding authors.} \, Yongqi Li$^4$ \, Cunxiao Du \, \\ \textbf{Fuli Feng$^3$\footnotemark[2] \, See-Kiong Ng$^1$ \, Tat-Seng Chua$^1$}\\
  $^1$National University of Singapore\, $^2$Tsinghua University\, \\ 
  $^3$University of Science and Technology of China \, \\
  $^4$The Hong Kong Polytechnic University \\
  \texttt{xylin1028@gmail.com, chaoqun@yang.email.cn,} \\
  \texttt{\{wenjiewang96,liyongqi0,cnsdunm,fulifeng93\}@gmail.com,} \\
  \texttt{seekiong@nus.edu.sg,
  dcscts@nus.edu.sg}
}
\begin{document}

\maketitle

\begin{abstract}
Large Language Model (LLM)-based generative recommendation has achieved notable success, yet its practical deployment is costly particularly due to excessive inference latency caused by autoregressive decoding. 
For lossless LLM decoding acceleration, Speculative Decoding (SD) has emerged as a promising solution. 
However, applying SD to generative recommendation presents unique challenges due to the requirement of generating top-$K$ items (\ie $K$ distinct token sequences) as a recommendation list by beam search. This leads to more stringent verification in SD, where all the top-$K$ sequences from the target LLM must be successfully drafted by the draft model at each decoding step. 
To alleviate this, we consider 1) boosting top-$K$ sequence alignment between the draft model and the target LLM, and 2) relaxing the verification strategy to reduce trivial LLM calls. 
To this end, we propose an alignment framework named \textit{AtSpeed}, which presents the \textit{AtSpeed-S} optimization objective for top-$K$ alignment under the strict top-$K$ verification. Moreover, we introduce a relaxed sampling verification strategy that allows high-probability non-top-$K$ drafted sequences to be accepted, significantly reducing LLM calls. 
Correspondingly, we propose \textit{AtSpeed-R} for top-$K$ alignment under this relaxed sampling verification. 
Empirical results on two real-world datasets demonstrate that AtSpeed significantly accelerates LLM-based generative recommendation, \eg near 2$\times$ speedup under strict top-$K$ verification and up to 2.5$\times$ speedup under relaxed sampling verification. 
The codes and datasets are available at~\url{https://github.com/Linxyhaha/AtSpeed}.  

\end{abstract}

\section{Introduction}\label{sec:introduction}


Large Language Model (LLM)-based generative recommendation has achieved remarkable performance, emerging as a promising avenue and attracting widespread attention~\citep{bao2023bi,rajput2023recommender}. 
Technically speaking, LLMs encode the user's historical interactions, and then perform multiple LLM calls (\ie forward processes of LLMs) autoregressively to decode top-$K$ ranked items as recommendations~\citep{zheng2023adapting}. 
Despite the effectiveness, the inference of LLM-based recommender models is unaffordably time-consuming, hindering real-world deployments~\citep{cui2024distillation}. 
Such intolerable time consumption primarily arises from the decoding process as shown in Figure~\ref{fig:intro}(a), where multiple serial LLM calls are required for step-by-step autoregressive generation~\citep{leviathan2023fast,caimedusa}. 
In light of this, it is essential to achieve lossless LLM decoding acceleration for LLM-based generative recommendation. 

To accelerate the LLM decoding losslessly, Speculative Decoding (SD)~\citep{leviathan2023fast,xia2023speculative} has been proposed as a promising approach in Natural Language Processing (NLP). SD utilizes a draft model (\eg a compatible small-sized language model) to reduce the number of target LLM calls in the decoding process~\citep{miao2023towards}. 
Technically, SD follows a draft-then-verify paradigm~\citep{xia2024unlocking}, which first efficiently drafts multiple subsequent tokens by a draft model, and then verifies the drafted tokens in parallel by the target LLM in a single call. 
As depicted in Figure~\ref{fig:intro}(b), tokens are either accepted or rejected, with the accepted tokens up to the first rejection step being utilized for subsequent decoding. 
As such, the number of LLM calls can be reduced by leveraging the accepted tokens from the draft model, thereby improving the decoding efficiency. 

%
%
However, it is non-trivial to apply SD for LLM-based generative recommendation due to the challenge of more stringent $N$-to-$K$ verification.
Specifically, traditional SD for NLP tasks typically follows an $N$-to-1 verification to generate only one response, which requires accepting a single token out of $N$ drafted tokens at each step (Figure~\ref{fig:intro}(b)). 
In contrast, recommendation tasks necessitate generating top-$K$ items (\ie $K$ distinct token sequences) through beam search, 
resulting in an $N$-to-$K$ verification problem (Figure~\ref{fig:intro}(c)). 
For each verification step, one LLM call can be skipped if and only if all the top-$K$ sequences are successfully drafted from the $N$ candidates. 
To elaborate, SD fails to reduce target LLM calls in Figure~\ref{fig:intro}(c) since $a_3$ is not drafted in the first step. 
As such, the $N$-to-$K$ verification poses greater challenges than $N$-to-1 verification as each step requires drafting all the top-$K$ sequences. 

To achieve effective SD for LLM-based generative recommendation, we formulate the SD task under the $N$-to-$K$ verification. 
To reduce the target LLM calls under $N$-to-$K$ verification, we consider two objectives in the drafting and verification steps of SD: 1) \textbf{top-$\bm{K}$ alignment}, which aims to align the drafted sequences with the top-$K$ sequences generated by the target LLM, thereby maximizing the recall of all top-$K$ sequences; 
and 2) \textbf{verification relaxation}, which seeks to ease the strict matching with the top-$K$ sequences from the target LLM, enhancing the acceptance rate of drafted sequences while maintaining the accuracy of top-$K$ recommendations. 

\begin{figure}[t]
\setlength{\abovecaptionskip}{0.15cm}
\setlength{\belowcaptionskip}{-0.20cm}
\centering
\includegraphics[scale=0.7]{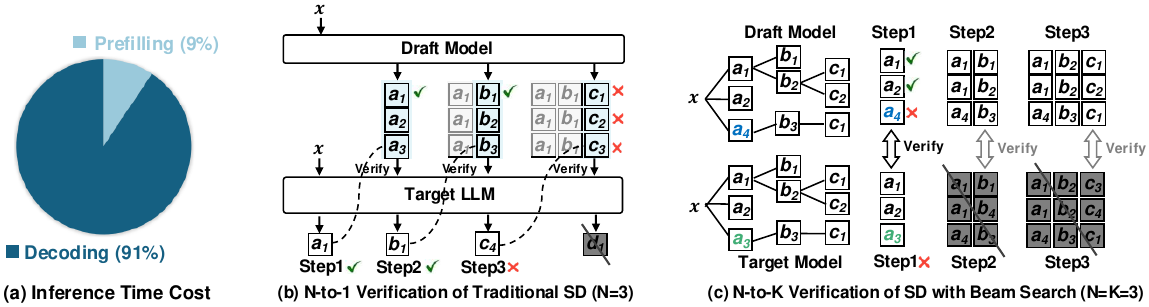}
\caption{(a) The inference time costs of LC-Rec~\citep{zheng2023adapting} with LLaMA-7B on a single A5000 GPU. 
(b) The illustration of the $N$-to-1 verification of SD with greedy decoding in NLP tasks. (c) $N$-to-$K$ verification of SD with beam search in recommendation tasks, where the drafting length, candidate number $N$, and beam size $K$ are set at 3 for illustration. 
}
\label{fig:intro}
\end{figure}

To this end, we propose an \textbf{A}lignmen\textbf{t} framework for \textbf{Spe}culativ\textbf{e} \textbf{d}ecoding (AtSpeed) tailored for LLM-based generative recommendation. 
First, under the strict top-$K$ verification, we propose an optimization objective named \textbf{\textit{AtSpeed-S}} to train the draft model. AtSpeed-S improves the top-$K$ alignment theoretically by minimizing the Reverse Kullback-Leibler Divergence (RKLD)~\citep{huszar2015not} and a probability density regularization term (see Section~\ref{sec:strict_verification}). 
Moreover, for verification relaxation, we introduce a relaxed sampling verification strategy that allows the non-top-$K$ drafted sequences with high generation probabilities to be accepted. To maintain recommendation accuracy, this verification strategy ensures that the generation distribution of SD is approximately equivalent to that of the target LLM with sampling-based beam search (see Section~\ref{sec:soft_verification}). 
Under the relaxed sampling verification, we design \textbf{\textit{AtSpeed-R}} for top-$K$ alignment, which minimizes the Total Variance Distance (TVD) on the generation probabilities of top-$K$ sequences (see Section~\ref{sec:soft_verification}). 

We conduct extensive experiments using both verification strategies on two real-world recommendation datasets, demonstrating that AtSpeed significantly accelerates the decoding for LLM-based recommendation (around 2$\times$ speedup). Besides, the results confirm that the relaxed sampling verification strategy substantially improves decoding efficiency without sacrificing much recommendation accuracy. 
The contributions of this work are summarized as follows:
\begin{itemize}[leftmargin=*]
    \item We are the first to propose the speculative decoding task for LLM-based recommender acceleration, highlighting the significant challenge of shifting from $N$-to-$1$ verification to $N$-to-$K$ verification. 
    
    \item We propose a novel alignment framework named AtSpeed for speculative decoding under $N$-to-$K$ verification, with a relaxed sampling verification strategy for verification relaxation and two alignment objectives for superior top-$K$ alignment of draft models. 
    
    \item We conduct extensive experiments on two datasets, which 1) demonstrate the verification efficiency and accuracy of the relaxed sampling verification strategy; and 2) validate that AtSpeed achieves almost 2$\times$ speedup for LLM-based recommender decoding. 
\end{itemize}
\section{Task Formulation}\label{sec:task_formulation} 

\textbf{LLM-based Generative Recommendation.} 
In LLM-based generative recommendation, each item is represented by an item identifier, \ie a token sequence such as item title~\citep{bao2023bi} or learnable token sequence~\citep{zheng2023adapting}, linking the recommendation items to the language space for LLMs to understand user behaviors and recommend items~\citep{lin2024bridging}. 
Formally, given the user's historical interactions $\bm{x}$, the well-trained target LLM-based recommender model $\mathcal{M}_p$ generates top-$K$ ranked items via beam search, \ie $\{\bm{y}_{L,i}\}_{i=1}^{K} \leftarrow \mathcal{M}_p(\bm{x})$, where $\bm{y}_{L,i}=(y_1, y_2, \dots, y_L)_i$ is the item identifier of the $i$-th recommended item of length $L$\footnote{We follow the widely used codebook-based item identifier, due to its promising results and generalization ability on cold-start items~\citep{wang2024learnable,rajput2024recommender}. The codebook-based identifier will ensure each item has an identifier of the same length.}. 
However, the LLM inference is time-consuming due to the need to perform multiple LLM calls (\ie forward process) during autoregressive generation. 
To accelerate the LLM inference, we are motivated to leverage SD for LLM-based generative recommendation for its decoding acceleration without losing accuracy.

\begin{algorithm}[t]\small
    \caption{SD step with Top-\textit{K} Strict Verification}  
    \label{algo:strict_verification}
    \begin{algorithmic}[1]

    \Require Draft model $\mathcal{M}_q$, target LLM $\mathcal{M}_p$, prefix, target beam size $K$, draft beam size $N$

    \State $\mathcal{Y}_0^{q} \leftarrow \text{prefix}$,  
    $\mathcal{Y}_{\gamma+1}^{q} \leftarrow \emptyset$, 
    $\mathcal{Y}_\text{out}\leftarrow \emptyset$
    \For{$j=1$ to $\gamma$}
        \State $q_j 
 \leftarrow\mathcal{M}_q(\mathcal{Y}_{j-1}^q)$, 
 $\mathcal{Y}_j \leftarrow \text{Top}\textit{N}(q_j)$ \algorithmiccomment{Drafting sequences for every beam search step}
\EndFor
\State $p_1, p_2, \dots, p_{\gamma+1} \leftarrow \mathcal{M}_p(\mathcal{Y}_0^q), \mathcal{M}_p(\mathcal{Y}_1^q),\dots, \mathcal{M}_p(\mathcal{Y}_{\gamma}^q)$ \algorithmiccomment{Run $\mathcal{M}_p$ in parallel}

\For{$j=1$ to $\gamma+1$}
    \State $\mathcal{Y}_j^{p} \leftarrow \text{Top}\textit{K}(p_j)$, $\mathcal{Y}_{\text{out}}\leftarrow\mathcal{Y}_{j}^p$
    \If{$\mathcal{Y}_j^{p}\in\mathcal{Y}_j^q$} \State continue \algorithmiccomment{Accept if ideal top-\textit{K} sequences are fully drafted}
    \Else 
    \State break \algorithmiccomment{Reject}
    \EndIf
\EndFor
    \Ensure $\mathcal{Y}_\text{out}$
	\end{algorithmic}
\end{algorithm}

\textbf{SD for LLM-based Recommendation.} 
The challenge of applying SD to LLM-based generative recommendation with beam search lies in the shift from $N$-to-$1$ to the harder $N$-to-$K$ verification. 
In the following, we detail the draft-then-verify paradigm under the strict top-\textit{K} verification and formulate the task of SD for LLM-based generative recommendation under $N$-to-$K$ verification. 
\begin{itemize}[leftmargin=*]
    \item \textbf{Drafting}. Given the user's historical interactions $\bm{x}$ and the generated sequences in the previous SD step $\mathcal{Y}_t=\{\bm{y}_{t,i}\}_{i=1}^{K}$, a 
    compatible small-sized draft model $\mathcal{M}_{q}$ is used to generate the drafted beam sequences for $\gamma$ steps via beam search with the beam size of $N$: 
    \begin{equation}\small
    \label{eqn:preliminary_draft}
    \begin{aligned}
    {q}_{t+1}, {q}_{t+2}, \dots,  &q_{t+\gamma} \leftarrow \text{BeamSearch}(\bm{x}, \mathcal{Y}_{t}, \mathcal{M}_q), \\
   {\mathcal{Y}}^q_{t+1}, {\mathcal{Y}}^q_{t+2}, \dots, {\mathcal{Y}}^q_{t+\gamma}  &\leftarrow \text{Top}N(q_{t+1}), \text{Top}N(q_{t+2}),\cdots, \text{Top}N(q_{t+\gamma}), 
    \end{aligned}
    \end{equation}
    where ${q}_{t+j}$ with $j\in\{1,\dots,\gamma\}$ is the sequence probability distribution at step $j$ obtained by the beam search of draft model; and $\mathcal{Y}^q_{t+j}$ collects the top-$N$ drafted sequences with highest probabilities. 
    The drafted sequences are then fed into the target LLM $\mathcal{M}_p$ to obtain the target sequence probability ${p}_{t+1}, {p}_{t+2}, \dots, {p}_{t+\gamma+1}$, 
    where ${p}_{t+j}=\mathcal{M}_p(\mathcal{Y}_{t+j}^{q})$ with $j\in\{0,\dots,\gamma\}$. 
    Target LLM $\mathcal{M}_p$ then select the sequences with top-\textit{K} probabilities as ideal sequences, \ie $\mathcal{Y}_{t+j}^{p} \leftarrow \text{Top}K({p}_{t+j})$. 
    For simplicity, we omit the subscript $t$
    and use ${q}_j$, ${p}_j$, $ {\mathcal{Y}}^q_{j}$, and ${\mathcal{Y}}^p_{j}$ as shorthand for $ {q}_{t+j}$, ${p_{t+j}}$, $  {\mathcal{Y}}^q_{t+j}$, and $  {\mathcal{Y}}^p_{t+j}$, respectively, in contexts where the exact value of $t$ is not essential. 
    \item \textbf{Strict Top-\textit{K} Verification.} 
For every beam search step $j$, we have $N$ drafted sequences $\mathcal{Y}_q$\footnote{We denote $\mathcal{Y}_j^{p}$ and $\mathcal{Y}_j^{q}$ with $\mathcal{Y}_p$ and $\mathcal{Y}_q$, respectively, whenever step $j$ is clear.} and $K$ ideal sequences $\mathcal{Y}_p$. 
We define the \textit{step acceptance} as successfully drafting all the top-\textit{K} ideal sequences for that step. 
Formally, the beam search step is accepted if $\mathcal{Y}_p\in\mathcal{Y}_q$. 
Otherwise, we reject the step, and discard the whole drafted sequences $\mathcal{Y}_q$ and correct it with $\mathcal{Y}_{p}$. 
From beam search step $j=1$ to $j=\gamma$, we sequentially verify $\mathcal{Y}_q$ until the first rejection step occurs. 
The corrected sequences $\mathcal{Y}_{p}$ at the first rejection step will be the output of the current SD step. 
If every step is successfully accepted, we select top-\textit{K} sequences from ${p}_{\gamma+1}$ as the output of the current SD step. 
The target LLM calls for all previously accepted steps could be reduced as they are successfully skipped through a single LLM call via parallel verification. 
The process of strict top-\textit{K} verification is presented in Algorithm~\ref{algo:strict_verification}. 

\end{itemize}

It is highlighted that SD for beam search under $N$-to-$K$ verification is more difficult than $N$-to-$1$ verification because an LLM call can be skipped if and only if the 
top-$K$ sequences are entirely drafted. 
To improve the acceptance rate of the drafted sequences to reduce the LLM calls, we have two key considerations:  
1) top-$K$ alignment, which encourages the draft model to generate strongly aligned top-$K$ sequences; and  
2) verification relaxation, which seeks to relax the strict matching for greater acceleration without much accuracy sacrifice. 
\section{AtSpeed}\label{sec:method}

To pursue the two objectives, we propose AtSpeed, an alignment framework for SD under $N$-to-$K$ verification. 
AtSpeed designs two effective alignment objectives for the draft model to get a higher acceptance rate under the strict top-\textit{K} verification (Section~\ref{sec:strict_verification}) and the proposed novel relaxed verification (Section~\ref{sec:soft_verification}). 
The overview of AtSpeed is presented in Figure~\ref{fig:overview}. 


\begin{figure}[t]
\setlength{\abovecaptionskip}{0cm}
\setlength{\belowcaptionskip}{0cm}
\centering
\includegraphics[scale=0.7]{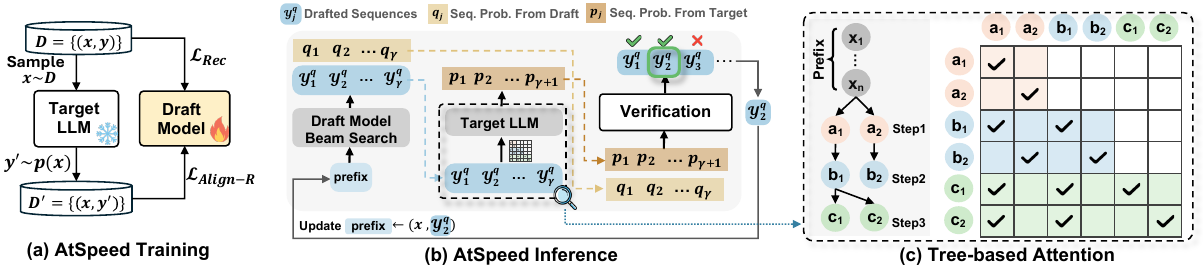}
\caption{Overview of AtSpeed. (a) shows the alignment training of the draft model with an additional alignment loss tailored for different verification strategies, \eg $\mathcal{L}_{\text{Align-R}}$. (b) depicts the AtSpeed inference, where the well-trained draft model produces beam search sequences from each step, \ie {${\mathcal{Y}_{j}^{q}}$}, for the target LLM to verify. 
The beam sequences from the last accepted step before encountering the first rejection are utilized in the following SD step, \ie $\mathcal{Y}_{2}^q$. 
(c) demonstrates the tree-based attention for the drafted beam sequences ($N$=2 and $\gamma$=3).
}
\label{fig:overview}
\end{figure}

\subsection{Alignment for Strict Top-K Verification (AtSpeed-S)}\label{sec:strict_verification}

Under strict top-$K$ verification, 
the draft model is expected to generate top-\textit{K} sequences that strictly align with the top-ranked sequences from the target LLM. 
To achieve this, we design an optimization objective named AtSpeed-S that directly optimizes the acceptance rate for the strict top-$K$ verification.  




\textbf{Acceptance Rate.}
We define the acceptance rate $\beta$ as the probability of the step acceptance, \ie accepting
all the top-$K$ ideal sequences from 
the draft sequences $\mathcal{Y}_q$. 
With strict top-\textit{K} verification strategy, for each step, we have 
{
$\beta=1$ if 
$\exists\mathcal{Y}'_{q}\subseteq\mathcal{Y}_{q}$
such that 
${p}({\bm{y}})\geq{p}(\bm{y}_{K})$ for $\forall \bm{y}\in\mathcal{Y}'_q$
,
where $|\mathcal{Y}'_{q}|=K$, and $\bm{y}_K$ is the sequence that has the \textit{K}-th highest probability in ${p}$. 
}

\textbf{Alignment Objective.}
Since the acceptance rate directly affects the acceleration performance, 
we aim to optimize the acceptance rate of the draft model to achieve superior top-$K$ alignment with the target LLM. 
To elaborate, we consider the following alignment objective for the draft model $\mathcal{M}_q$:
\begin{equation}\small\label{eqn:strict_opt_first_expand}
\begin{aligned}
    \theta^{*} &:= \mathop{\arg\max}_{\theta\in\Theta} \quad  \mathbb{E}_{\bm{y}\sim\mathcal{Y}_{q_\theta}} \frac{p(\bm{y})}{p(\bm{y}_K)}
    =\mathop{\arg\max}_{\theta\in\Theta} -\sum_{\bm{y}\in\mathcal{Y}_{q_{\theta}}}q_{\theta}(\bm{y})log\frac{q_{\theta}(\bm{y})}{p(\bm{y})} + \sum_{\bm{y}\in\mathcal{Y}_{q_{\theta}}}q_{\theta}(\bm{y})\log \frac{q_{\theta}(\bm{y})}{p(\bm{y}_K)},  
\end{aligned}
\end{equation}
where $\mathcal{Y}_{q_{\theta}}=\mathcal{Y}_1^{q}\cup\mathcal{Y}_2^{q}\dots\cup\mathcal{Y}_L^{q}$ aims at maximizing acceptance rate for every beam search step (see detailed derivatives in  Appendix~\ref{app:derivation_strict}). 
This alignment objective can be further expanded as: 
\begin{equation}\small
\label{eq:strict_opt}
\begin{aligned}
    \theta^*:=\mathop{\arg\min}_{\theta \in \Theta} \mathbb{E}_{(\bm{x},\mathcal{Y})\sim \mathcal{D}'}
    \sum_{\bm{y}\in\mathcal{Y}}
    \big[
    \frac{1}{|\bm{y}|}\sum_{t=1}^{|\bm{y}|}
    \big[
    {\sum_{y_{t}\in \mathcal{V}}
    q_\theta({y_t}|c_{<t})
    \log \frac{q_\theta(y_t|c_{<t})}{p(y_t|c_{<t})}} - 
    {\sum_{y_{t}\in \mathcal{V}} q_{\theta}(y_t|c_{<t}) \log \frac{q_\theta(y_t|c_{<t})}{p(y_K)}
    }
    \big]
    \big], 
\end{aligned}
\end{equation}
where $c_t=(\bm{x},\bm{y}_{<t})$ is the context, $p(y_K)=p(\bm{y}_{K,t}|\bm{y}_{K,<t})$, 
and $\mathcal{D}'=\{(\bm{x},\mathcal{Y}= \text{Top}K((1-\lambda)\mathcal{M}_{q}(\bm{x})+\lambda\mathcal{M}_{p}))|\bm{x}\sim\mathcal{D}\}$\footnote{{We follow~\citep{gu2024minillm} to adopt the target LLM-mixed sampling for data construction to improve training efficiency and generated data quality.}}, and $\mathcal{V}$ is the LLM vocabulary. 

However, aligning every token over the entire vocabulary with ${y_t\in \mathcal{V}}$ might lead to suboptimal results due to the noises introduced by invalid tokens with high probabilities. 
LLM-based recommender models usually employ constrained generation~\citep{hua2023index} to generate valid item identifiers only. 
Blindly aligning across all tokens in $\mathcal{V}$ may cause unnecessary alignment to these high-probability but invalid tokens that will never be generated by the target LLM. 
To mitigate this issue, we define $\mathcal{V}_c=\text{ConstrainedTop}K(q)$ 
to block out the potential alignment noises from invalid tokens. 
We then define the alignment loss as: 
\begin{equation}\small
\label{eq:strict_align_loss}
    \mathcal{L}_\text{Align-S}=\frac{1}{|\mathcal{D}'|}\sum_{(\bm{x},\mathcal{Y})\sim \mathcal{D}'}
    \sum_{\bm{y}\in\mathcal{Y}}
    \big[
    \frac{1}{|\bm{y}|}\sum_{t=1}^{|\bm{y}|}
    \big[
    \underbrace{\sum_{y_{t}\in \mathcal{V}_c}
    q_\theta({y_t}|c_{<t})
    \log \frac{q_\theta(y_t|c_{<t})}{p(y_t|c_{<t})}}_{\let\scriptstyle\textstyle
    \substack{(\textbf{RKLD})}} - 
    \underbrace{\sum_{y_{t}\in \mathcal{V}_c} q_{\theta}(y_t|c_{<t}) \log \frac{q_\theta(y_t|c_{<t})}{p(y_K)}
    }_{\let\scriptstyle\textstyle
    \substack{(\textbf{Density Regularization})}}
    \big]
    \big], 
\end{equation}
which is essentially minimizing the RKLD over the top-\textit{K} sequence probabilities and a density regularization term. 
Intuitively, minimizing RKLD emphasizes aligning the draft model $\mathcal{M}_q$ to the target LLM $\mathcal{M}_p$ within the draft model's generation capability~\citep{gu2024minillm}, particularly for the top-\textit{K} sequence probability distribution. 
Moreover, the density regularization term encourages the top-\textit{K} sequence probabilities of the draft model to dominate the whole probability distribution. 

\textbf{Overall Loss.} 
Based on Eq.(\ref{eq:strict_align_loss}), AtSpeed-S defines the overall training loss for the draft model as:  
\begin{equation}\small\label{eq:strict_overall_loss}
    \mathcal{L}_{\text{AtSpeed-S}} =   \alpha \mathcal{L}_{\text{Align-S}} + (1-\alpha) \mathcal{L}_{\text{Rec}}, 
\end{equation}
{where $\mathcal{L}_{\text{Rec}}$ is the original loss to train the model for the recommendation tasks (see Appendix~\ref{app:exp_setting})} and $\alpha$ is the hyper-parameter to control the alignment strength. 

\subsection{Alignment for Relaxed Sampling Verification (AtSpeed-R)}\label{sec:soft_verification}
In addition to the strengthened alignment for the strict top-$K$ verification, we further consider verification relaxation to reduce the trivial LLM calls. 
From the angle of verification, we introduce a relaxed sampling verification strategy based on the sequence generation probability to improve the acceptance rate of draft sequences. 
This verification strategy ensures that sequences falling outside the target's top-$K$ predictions can still be accepted with a certain probability, while preserving the output distribution closely aligned with that of the target LLM to maintain the recommendation accuracy to some extent. Under this relaxed sampling verification strategy, we propose AtSpeed-R for top-$K$ alignment accordingly. 



\textbf{Relaxed Sampling Verification}. 
We design the verification strategy for each beam search step $j\in\{1,\dots,\gamma\}$ as follows. 
Given the drafted results $\mathcal{Y}_q \sim q$\footnote{The drafted sequences are sampled from the sequence distribution $q_j$, and superscript $j$ is omitted for clarity.}, 
for each $\bm{y}\in \mathcal{Y}_q$, we accept $\bm{y}$ if ${p}(\bm{y})\geq{q}(\bm{y})$. 
{
Intuitively, if the target LLM is even more confident in generating a high-probability sequence, the candidate sequence is likely to be generated by the LLMs and should be accepted.} 
Otherwise, we reject $\bm{y}$ with probability of $1-\frac{p(\bm{y})}{q(\bm{y})}$ and resample $\bm{y}\sim p'=\text{norm}(\max(0,p(\bm{y})-q(\bm{y})))$. 
Notably, for the non-top-$K$ drafted sequence $\bm{y}$ with a high draft probability (\ie $q(\bm{y})\geq p(\bm{y})$), this verification strategy ensures they have a certain probability of $\frac{p(\bm{y})}{q(\bm{y})}$ to be accepted. 
We present more details on the relaxed sampling verification strategy in Algorithm~\ref{algo:relaxed_verification} in Appendix. 
Drawing upon the with-replacement sampling approximation\footnote{When the sample size (\ie the number of recommended items $K$) is much smaller than the total population size (\ie all sequences for sampling), the sampling without replacement can be effectively approximated to the sampling with replacement. See Appendix~\ref{app:proof_distribution} for proof.}, 
the joint distribution of \textit{K} output sequences under such verification is approximately equivalent to that of the target LLM as proven in Appendix~\ref{app:proof_distribution}. Such equivalence ensures that the relaxed sampling verification does not sacrifice much recommendation accuracy (see Section~\ref{sec:overall_performance} for empirical evidence). 


\textbf{Acceptance Rate.} 
For the relaxed sampling verification strategy, we obtain the probability of step acceptance 
$\beta=\prod_{K}b$ under the with-replacement sampling approximation, where 
\begin{equation}\small
\label{eq:soft_acceptance_rate}
b=\mathbb{E}_{\bm{y}\sim q(\bm{y})} \begin{cases}
    1, &q(\bm{y}) \leq p(\bm{y}) \\
    \frac{p(\bm{y})}{q(\bm{y})}, &q(\bm{y}) > p(\bm{y})
\end{cases}
= \mathbb{E}_{y\sim q(\bm{y})} \min(1,\frac{p(\bm{y})}{q(\bm{y})}) 
=1-\text{TVD}(p(\bm{y}),q(\bm{y})).
\end{equation}
The derivation of Eq.(\ref{eq:soft_acceptance_rate}) can be easily extended from~\citep{leviathan2023fast} as in Appendix~\ref{app:proof_relaxed_sampling}. 

\textbf{Alignment Objective.} 
Under the relaxed sampling verification, we consider maximizing $\log\beta=-\sum_{K}\log \text{TVD}(q,p)$, which is equivalent to minimizing $\sum_{K}\text{TVD}(q,p)$ to improve the acceptance rate. Thereafter, we present the alignment objective for the draft model: 
\begin{equation}\label{eq:soft_opt}
\begin{aligned}
\theta^{*}:&=\mathop{\arg\min}_{\theta \in \Theta} 
    \mathbb{E}_{(\bm{x},\bm{y})\sim \mathcal{D}'}
    K \cdot  
    \big[
    \frac{1}{|\bm{y}|}\sum_{t=1}^{|\bm{y}|} \text{TVD} (p'(\cdot|\bm{x},\bm{y}_{<t}), q'(\cdot|\bm{x},\bm{y}_{<t}))
    \big]. 
\end{aligned}
\end{equation}
Similar to AtSpeed-S, we perform constrained alignment over the top-$K$ valid tokens, 
and therefore $p'(\cdot|\bm{x},\bm{y}_{<t})$ and $p'(\cdot|\bm{x},\bm{y}_{<t})$ is the normalized truncated sequence probability over top-$K$ valid tokens. 
Based on the with-replacement sampling approximation, Eq.(\ref{eq:soft_opt}) minimizes TVD with the strength of \textit{K} for the same sequence. 
Nevertheless, considering the beam search with sampling would obtain \textit{K} different sequences, we alternatively leverage the top-\textit{K} sequences from the target LLM, \ie $\mathcal{D}'=\{(\bm{x}, \bm{y}\sim \text{Top}\textit{K}(\mathcal{M}_p(\bm{x})))|\bm{x}\sim\mathcal{D}\}$ to minimize the TVD. 
Finally, we define the alignment loss of AtSpeed-R as,
\begin{equation}
    \mathcal{L}_\text{Align-R}=\frac{1}{|\mathcal{D}'|}\sum_{(\bm{x},\bm{y})\sim\mathcal{D}'}[\frac{1}{|\bm{y}|}\sum_{t=1}^{|\bm{y}|}TVD(p'(\cdot|\bm{x},\bm{y}_{<t}), q'(\cdot|\bm{x},\bm{y}_{<t}))].  
\end{equation}
\textbf{Overall Loss.} 
 The overall loss of AtSpeed-R is defined as:
 \begin{equation}\small
     \mathcal{L}_{\text{AtSpeed-R}} = \alpha \mathcal{L}_{\text{Align-R}} + (1-\alpha)\mathcal{L}_{\text{Rec}},
 \end{equation}
where $\alpha$ controls the strength of the alignment loss. 



\subsection{Inference of AtSpeed}\label{sec:method_inference} 
The inference of AtSpeed follows the draft-then-verify paradigm. 
To efficiently recommend items during inference, AtSpeed
first leverages a well-aligned draft model for drafting and then chooses a specific strategy for verification. 
Specifically, 1) to obtain identical recommendation results, 
AtSpeed can utilize the draft model trained by AtSpeed-S and adopt the strict top-$K$ verification strategy for SD. 
Alternatively, 2) to further improve the inference speedup, AtSpeed can use the draft model trained by AtSpeed-R along with the corresponding relaxed sampling verification strategy.
Note that any draft model can be interchangeably applied to the strict top-$K$ and relaxed sampling verification strategies, although they may not be optimized for enhancing the acceptance rate under the corresponding verification strategy. 

\textbf{Tree-based Attention.} A challenge of AtSpeed inference is the verification inefficiency issue. 
Using beam search results from every step would largely increase the number of drafted sequences ($\gamma N$ sequences), 
where some prefix is shared by different beam sequences. 
As such, self-attention is repeatedly calculated for the same prefix through a single LLM call, thus leading to verification inefficiency, especially with a larger $N$ (\eg $N=20$). 
In this work, we leverage the tree-based attention~\citep{miao2024specinfer} to 
eliminate the repeated calculations 
via a single flattened sequence, 
thereby boosting the efficiency of verification. 
The tree-based attention for the drafted sequences from all steps is illustrated in Figure~\ref{fig:overview}(c) and implementation details are presented in Appendix~\ref{app:exp_setting}.


\section{Experiments}\label{sec:exp}
In this section, we conduct extensive experiments to answer the following research questions: 



1) \textbf{RQ1}: {How does our proposed AtSpeed perform on LLM-based recommendation under strict top-\textit{K} and relaxed sampling verification strategies?} 
2) \textbf{RQ2}: {How do different components of AtSpeed affect the decoding acceleration?}
3) \textbf{RQ3}: How do different hyper-parameters affect the performance?

\subsection{Experimental Settings}\label{sec:exp_setting}

\textbf{Datasets and Baselines.}
To evaluate our proposed framework, we instantiate AtSpeed on a SOTA LLM-based generative recommender model LC-Rec~\citep{zheng2023adapting} and test on two real-world recommendation datasets\footnote{{We also evaluate our methods on MovieLens-1M and Goodreads datasets (refer to Table~\ref{tab:performance_movielens} in Appendix~\ref{app:exp_additional_results}).}} from the popular benchmark Amazon review datasets\footnote{\url{https://cseweb.ucsd.edu/~jmcauley/datasets/amazon_v2/}.}. 
1) \textbf{Beauty} contains user interactions with the beauty products and 
2) \textbf{Games} collects the user interactions with the video games. 
For both datasets, each item has rich textual information such as title and description. 
Based on the item textual information, we follow LC-Rec to assign each item with an item identifier generated by RQ-VAE~\citep{lee2022autoregressive}. 
More details of the datasets can be found in Appendix~\ref{app:exp_setting}. 

We compare our proposed alignment methods with Supervised Fine-tuning (SFT) and 
several representative Knowledge Distillation (KD) methods as follows: 
1) \textbf{SFT} fine-tunes the draft model with the recommendation loss $\mathcal{L}_{Rec}$ on the recommendation dataset; 
2) \textbf{WordKD}~\citep{sanh2019distilbert,hinton2015distilling} fine-tunes the draft model to align with the token probability of the target LLM by KLD on the recommendation dataset; 
3) \textbf{SeqKD}~\citep{kim2016sequence} additionally utilizes the top-\textit{K} target LLM-generated data to fine-tune the draft model with the recommendation loss $\mathcal{L}_{Rec}$; 
4) \textbf{TVDKD}~\citep{wen2023f} aligns the draft model and the target LLM with the symmetric divergence TVD. 
{We also extend a retrieval-based drafting method 5) \textbf{DARE}~\citep{xi2024decoding} for a comprehensive comparison. }

\textbf{Evaluation Metrics.} 
To measure the decoding efficiency, {we use {walltime speedup}@$K$ (WS@$K$) compared to the standard target LLM inference for evaluation.} 
In addition, we report {accept step}@$K$ (AS@$K$), which is defined as the average number of accepted steps during the SD for each user. 
As for the recommendation performance, we adopt the widely used metrics Recall@$K$ and NDCG@$K$ for evaluation.

\textbf{Implementation Details.}
We instantiate AtSpeed on LC-Rec, where the backend LLM is LLaMA-7B~\citep{touvron2023llama}. 
For each dataset, we fine-tune the target LLM using the recommendation loss $\mathcal{L}_\text{Rec}$ with the parameter-efficient fine-tuning technique LoRA~\citep{hu2021lora}. 
As for the draft model, we fully fine-tune a compatible small-sized model LLaMA-68M~\citep{miao2024specinfer} for alignment training, and then utilize the fine-tuned model for SD inference. 
We set draft length $\gamma=4$, number of recommended items $K=\{1,3,5,10,20\}$, and draft beam size $N=40$. 
More implementation details are provided in Appendix~\ref{app:exp_setting}.

\subsection{Overall Performance (RQ1)}\label{sec:overall_performance}

We evaluate our method under both standard beam search (beam size=$K$) with strict top-$K$ verification and sampling-based beam search (temperature=$1$) with relaxed sampling verification. 
The acceleration results of the baselines and AtSpeed on two datasets are presented in Table~\ref{tab:overall_performance_beauty}. 
We can observe that:

\begin{itemize}[leftmargin=*]
    \item The walltime speedup under the relaxed sampling verification is generally better than the strict top-$K$ verification, validating the effectiveness of our proposed relaxed strategy. 
    It boosts the speedup by allowing non-top-$K$ sequences to be accepted with a certain probability. 

    \item Compared to SFT, the logit-level KD methods (WordKD and TVDKD) often yield comparable or even inferior performance. 
    However, this is not surprising because the alignment task for the logit-level KD is more difficult due to two possible reasons. 
    1) While SFT solely focuses on fitting the top-1 probability token, logit-level KD requires alignment across the entire vocabulary, making it more difficult to achieve due to the limited expressiveness of the smaller model~\citep{agarwal2024onpolicy}.   
    2) The alignment on the top-$K$ ranked items might be negatively affected by the invalid tokens with high probabilities. 
    During item generation, certain tokens with high probabilities will never be generated as they are deemed invalid tokens, and thereby will never be accepted by the target LLM (\cf Section~\ref{sec:strict_verification}). 

    \item Among the baselines, SeqKD yields competitive results in most cases. 
    This is reasonable since SeqKD 
    1) aligns the draft model with the target LLM at the sequence level, 
    which fosters the generation of sequences more aligned with the target LLM's distribution
    (superior performance than WordKD); and 
    2) fine-tunes the draft model with target LLM-generated data, which implicitly incorporates the information of valid tokens in item identifiers, thereby alleviating the negative effect from the noisy invalid tokens (superior performance than TVDKD). 
    The competitiveness of SeqKD is also consistent with the observations in NLP tasks~\citep{agarwal2024onpolicy}. 
    {The less satisfying performance of DARE is reasonable since the uniformly retrieved valid sequences might not be aligned with the target LLM specifically on top-$K$ sequences. }

    \item AtSpeed significantly accelerates the LLM-based recommender inference under both strict top-$K$ and relaxed sampling verifications, 
    showing the superiority of our proposed method for strong top-$K$ alignment. 
    In particular, AtSpeed-S under strict top-$K$ verification achieves an average speedup of 1.97$\times$ on Beauty and 1.83$\times$ on Games while obtaining the identical results of the top-$K$ recommended items. 
    Moreover, AtSpeed-R further enhances the efficiency with an average speedup of 2.11$\times$ and 2.05$\times$ on Beauty and Games, respectively. 
    Furthermore, as $K$ increases, the number of accept steps gradually decreases due to the increased difficulty of the $N$-to-$K$ verification. 
    Nevertheless, it is highlighted that the speedup for $K=20$ is still comparable to that for $K=5$ and outperforms the speedup for $K=10$. 
    This is attributed to tree-based attention, which efficiently verifies the drafted sequences (more detailed analysis in Appendix~\ref{app:exp_additional_results}).  
    \item {The ranking performance under strict top-$K$ verification is lossless because we only accept the perfectly matched top-$K$ beam sequences to obtain identical recommendation results. 
    For relaxed sampling verification, the ranking performance across different alignment methods only shows limited performance drops compared to the target LLM’s top-$K$ results, \ie performance under strict top-$K$ verification. 
    This meets our expectations since the sampling-based verification ensures the approximately equivalent distribution between the SD output and target LLM output (refer to Appendix~\ref{app:exp_additional_results} for additional accuracy results and analysis).} 
\end{itemize}


\begin{table}[t]
\setlength{\abovecaptionskip}{0.05cm}
\setlength{\belowcaptionskip}{0cm}
\caption{Overall comparison of walltime speedup (WS@$K$) and accept steps (AS@$K$) between the baselines and AtSpeed instantiated on LC-Rec (LLaMA-7B) across two datasets. 
``Avg'' is calculated across $K$ from $1$ to $20$ (full results are presented in Table~\ref{tab:app_full_acceleration_performance} in Appendix~\ref{app:exp_additional_results}). 
For each verification strategy, the best
results are highlighted in bold and the second-best results are underlined.}
\label{tab:overall_performance_beauty}
\setlength{\tabcolsep}{3mm}{
\resizebox{\textwidth}{!}{
\begin{tabular}{l|l|ccc|c|ccc|c|cc}
\toprule
\multicolumn{12}{c}{\textbf{Beauty}} \\ \midrule
\textbf{Verification} & \textbf{Method} & \textbf{WS@5} & \textbf{WS@10} & \textbf{WS@20} & \textbf{Avg WS} & \textbf{AS@5} & \textbf{AS@10} & \textbf{AS@20} & \textbf{Avg AS} & \textbf{Recall@5} & \textbf{NDCG@5} \\ \midrule
\multicolumn{1}{l|}{\multirow{7}{*}{\textbf{Strict Top-$\bm{K}$}}} & \multicolumn{1}{l|}{\textbf{DARE}} & 1.06 & 1.15 & 1.48 & 1.18 & 0.26 & 0.05 & 0.00 & 0.31 & 0.0056 & 0.0051 \\ 
~ & \multicolumn{1}{l|}{\textbf{SFT}} & 1.43 & 1.37 & 1.55 & 1.56 & 1.32 & 0.66 & 0.09 & 1.18 & 0.0056 & 0.0051 \\ 
~ & \multicolumn{1}{l|}{\textbf{WordKD}} & 1.58 & 1.52 & 1.58 & 1.68 & 1.60 & 1.03 & 0.16 & 1.40 & 0.0056 & 0.0051 \\ 
~ & \multicolumn{1}{l|}{\textbf{TVDKD}} & 1.44 & 1.37 & 1.57 & 1.55 & 1.31 & 0.65 & 0.09 & 1.17 & 0.0056 & 0.0051 \\ 
~ & \multicolumn{1}{l|}{\textbf{SeqKD}} & {\ul 1.75} & 1.67 & 1.68 & {\ul 1.83} & {\ul 1.85} & 1.27 & 0.30 & {\ul 1.60} & 0.0056 & 0.0051 \\ 
~ & \multicolumn{1}{l|}{\cellcolor{gray!16}{\textbf{AtSpeed-S}}} & \cellcolor{gray!16}{\textbf{1.84}} & \cellcolor{gray!16}{\textbf{1.87}} & \cellcolor{gray!16}{\textbf{1.84}} & \cellcolor{gray!16}{\textbf{1.97}} & \cellcolor{gray!16}{\textbf{2.00}} & \cellcolor{gray!16}{\textbf{1.64}} & \cellcolor{gray!16}{\textbf{0.57}} & \cellcolor{gray!16}{\textbf{1.80}} & \cellcolor{gray!16}{0.0056} & \cellcolor{gray!16}{0.0051} \\ 
~ & \multicolumn{1}{l|}{\textbf{AtSpeed-R}} & 1.70 & {\ul 1.71} & {\ul 1.74} & 1.76 & 1.82 & {\ul 1.33} & {\ul 0.43} & 1.56 & 0.0056 & 0.0051 \\ \midrule
\multicolumn{1}{l|}{\multirow{7}{*}{\textbf{Relaxed Sampling}}} & \textbf{DARE} & 1.70 & 1.53 & 1.69 & 1.73 & 1.97 & 1.14 & 1.00 & 1.62 & 0.0059 $^{\textcolor{red}{\uparrow0.0003}}$ & 0.0044 $^{\textcolor{lightblue}{\downarrow0.0007}}$ \\ 
~ & \textbf{SFT} & 1.80 & 2.06 & 2.36 & 1.95 & 2.03 & 1.99 & 1.48 & 1.94 & 0.0057 $^{\textcolor{red}{\uparrow0.0001}}$ & 0.0041 $^{\textcolor{lightblue}{\downarrow0.0010}}$ \\ 
~ & \textbf{WordKD} & 1.81 & 1.99 & 2.05 & 1.87 & 2.01 & 1.87 & 1.07 & 1.82 & 0.0066 $^{\textcolor{red}{\uparrow0.0010}}$ & 0.0043 $^{\textcolor{lightblue}{\downarrow0.0008}}$ \\ 
~ & \textbf{TVDKD} & 1.81 & 2.06 & 2.35 & 1.96 & 2.03 & 1.99 & 1.45 & 1.94 & 0.0057 $^{\textcolor{red}{\uparrow0.0001}}$ & 0.0045 $^{\textcolor{lightblue}{\downarrow0.0006}}$ \\ 
~ & \textbf{SeqKD} & {\ul 1.90} & 2.11 & 2.31 & 2.01 & {\ul 2.10} & 2.01 & 1.40 & 1.97 & 0.0055 $^{\textcolor{lightblue}{\downarrow0.0001}}$ & 0.0045 $^{\textcolor{lightblue}{\downarrow0.0006}}$ \\ 
~ & \textbf{AtSpeed-S} & 1.89 & {\ul 2.12} & \textbf{2.51} & {\ul 2.07} & 2.09 & \textbf{2.03} & {\ul 1.71} & {\ul 2.05} & 0.0060 $^{\textcolor{red}{\uparrow0.0004}}$ & 0.0046 $^{\textcolor{lightblue}{\downarrow0.0005}}$ \\ 
~ & \cellcolor{gray!16}{\textbf{AtSpeed-R}} & \cellcolor{gray!16}{\textbf{1.94}} & \cellcolor{gray!16}{\textbf{2.16}} & \cellcolor{gray!16}{\ul 2.47} & \cellcolor{gray!16}{\textbf{2.11}} & \cellcolor{gray!16}{\textbf{2.13}} & \cellcolor{gray!16}{\ul 2.01} & \cellcolor{gray!16}{\textbf{1.77}} & \cellcolor{gray!16}{\textbf{2.10}} & \cellcolor{gray!16}{0.0058 $^{\textcolor{red}{\uparrow0.0002}}$} & \cellcolor{gray!16}{0.0049 $^{\textcolor{lightblue}{\downarrow0.0002}}$} \\ 
\midrule\midrule

\multicolumn{12}{c}{\textbf{Games}} \\ \midrule
\textbf{Verification} & \textbf{Method} & \textbf{WS@5} & \textbf{WS@10} & \textbf{WS@20} & \textbf{Avg WS} & \textbf{AS@5} & \textbf{AS@10} & \textbf{AS@20} & \textbf{Avg AS} & \textbf{Recall@5} & \textbf{NDCG@5} \\ \midrule
\multicolumn{1}{l|}{\multirow{7}{*}{\textbf{Strict Top-$\bm{K}$}}}  & \textbf{DARE} & 0.99 & 1.13 & 1.44 & 1.09 & 0.00 & 0.00 & 0.00 & 0.03 & 0.0074 & 0.0065 \\ 
~ & \textbf{SFT} & 1.43 & 1.40 & 1.58 & 1.53 & 1.32 & 0.91 & 0.15 & 1.20 & 0.0074 & 0.0065 \\ 
~ & \textbf{WordKD} & 1.31 & 1.35 & 1.47 & 1.48 & 1.10 & 0.80 & 0.14 & 1.13 & 0.0074 & 0.0065 \\ 
~ & \textbf{TVDKD} & 1.24 & 1.32 & 1.50 & 1.42 & 0.95 & 0.66 & 0.09 & 0.99 & 0.0074 & 0.0065 \\ 
~ & \textbf{SeqKD} & 1.60 & 1.46 & \textbf{1.77} & 1.71 & 1.67 & 1.05 & \textbf{0.76} & 1.55 & 0.0074 & 0.0065 \\ 
~ & \cellcolor{gray!16}{\textbf{AtSpeed-S}} & \cellcolor{gray!16}{\textbf{1.78}} & \cellcolor{gray!16}{\textbf{1.85}} & \cellcolor{gray!16}{\ul 1.76} & \cellcolor{gray!16}{\textbf{1.83}} & \cellcolor{gray!16}{\textbf{1.96}} & \cellcolor{gray!16}{\textbf{1.69}} & \cellcolor{gray!16}{{\ul 0.68}} & \cellcolor{gray!16}{\textbf{1.72}} & \cellcolor{gray!16}{0.0074} & \cellcolor{gray!16}{0.0065} \\ 
~ & \textbf{AtSpeed-R} & {\ul 1.76} & {\ul 1.76} & 1.60 & {\ul 1.74} & {\ul 1.95} & {\ul 1.53} & { 0.32} & {\ul 1.59} & 0.0074 & 0.0065 \\  \midrule
\multicolumn{1}{l|}{\multirow{7}{*}{\textbf{Relaxed Sampling}}} & \textbf{DARE} & 1.68 & 1.19 & 1.42 & 1.51 & 1.96 & 0.37 & 0.00 & 1.27 & 0.0076 $^{\textcolor{red}{\uparrow0.0002}}$ & 0.0065 $^{\textcolor{red}{\uparrow0.0000}}$\\ 
~ & \textbf{SFT} & 1.84 & 1.97 & 1.69 & 1.86 & 2.05 & 1.89 & 0.58 & 1.78 & 0.0073 $^{\textcolor{lightblue}{\downarrow0.0001}}$ & 0.0060 $^{\textcolor{lightblue}{\downarrow0.0005}}$ \\ 
~ & \textbf{WordKD} & 1.78 & 1.84 & 1.56 & 1.76 & 1.99 & 1.68 & 0.25 & 1.63 & 0.0072 $^{\textcolor{lightblue}{\downarrow0.0002}}$ & 0.0058 $^{\textcolor{lightblue}{\downarrow0.0007}}$ \\ 
~ & \textbf{TVDKD} & 1.81 & 1.90 & 1.55 & 1.80 & 2.02 & 1.80 & 0.29 & 1.69 & 0.0069 $^{\textcolor{lightblue}{\downarrow0.0005}}$ & 0.0061 $^{\textcolor{lightblue}{\downarrow0.0004}}$ \\ 
~ & \textbf{SeqKD} & 1.90 & 2.03 & 2.05 & 1.95 & 2.10 & 1.98 & 1.22 & 1.93 & 0.0071 $^{\textcolor{lightblue}{\downarrow0.0003}}$ & 0.0059 $^{\textcolor{lightblue}{\downarrow0.0006}}$ \\ 
~ & \textbf{AtSpeed-S} & {\ul 1.91} & {\ul 2.04} & {\ul 2.13} & {\ul 2.04} & {\ul 2.10} & {\ul 1.98} & {\ul 1.28} & {\ul 2.00} & 0.0080 $^{\textcolor{red}{\uparrow0.0006}}$ & 0.0068 $^{\textcolor{red}{\uparrow0.0003}}$ \\ 
~ & \cellcolor{gray!16}{\textbf{AtSpeed-R}} & \cellcolor{gray!16}{\textbf{2.00}} & \cellcolor{gray!16}{\textbf{2.05}} & \cellcolor{gray!16}{\textbf{2.20}} & \cellcolor{gray!16}{\textbf{2.05}} & \cellcolor{gray!16}{ \textbf{2.17}} & \cellcolor{gray!16}{\textbf{1.98}} & \cellcolor{gray!16}{\textbf{1.35}} & \cellcolor{gray!16}{\textbf{2.02}} & \cellcolor{gray!16}{0.0076 $^{\textcolor{red}{\uparrow0.0002}}$} & \cellcolor{gray!16}{0.0063 $^{\textcolor{lightblue}{\downarrow0.0002}}$} \\ 
\bottomrule
\end{tabular}
}}
\vspace{-0.3cm}
\end{table}

\subsection{In-depth Analysis}\label{sec:in_depth}


\textbf{Ablation Study (RQ2).}
To analyze the effect of AtSpeed-S, 
we remove the density regularization term (``w/o DR''), constrained alignment (``w/o CA''), and tree-based attention (``w/o TA'') for AtSpeed-S. 
These variants of AtSpeed-S are evaluated under the strict top-$K$ verification. 
For AtSpeed-R, 
we replace the top-$K$ sequences with the top-1 sequence from target LLM (denoted as ``w/o top-$K$''), and separately remove the constrained alignment (``w/o CA'') and the tree-based attention (``w/o TA''). These variants are evaluated under the relaxed sampling verification. 

From the results as shown in Figure~\ref{fig:ablation}, we have the following observations. 
%
1) the performance decline of ``w/o DR'' indicates that aggregating probability around the top-$K$ tokens can effectively strengthen the alignment. 
2) The use of top-1 sequence from the target LLM to execute AtSpeed-R results in a lower acceptance rate. 
This makes sense since the top-1 sequence lacks diversity, which might overemphasize the alignment on the top-1 sequence and hurt the speedup. 
3) By discarding either the constrained alignment or tree attention would lead to performance decline for both AtSpeed-S and AtSpeed-R. 
This is expected since removing constrained alignment would introduce potential noisy tokens, which have high probability yet will be rejected to ensure in-corpus item recommendation. 
Furthermore, 
4) the removal of tree-based attention would hurt the speedup although it does not affect the acceptance rate, which validates the efficiency of utilizing tree-based attention for verification. 
\begin{figure}[t]
\vspace{-0.4cm}
\setlength{\abovecaptionskip}{-0.2cm}
\setlength{\belowcaptionskip}{-0cm}
  \centering 
  \hspace{-0.15in}
  \subfigure{
    \includegraphics[height=1.4in]{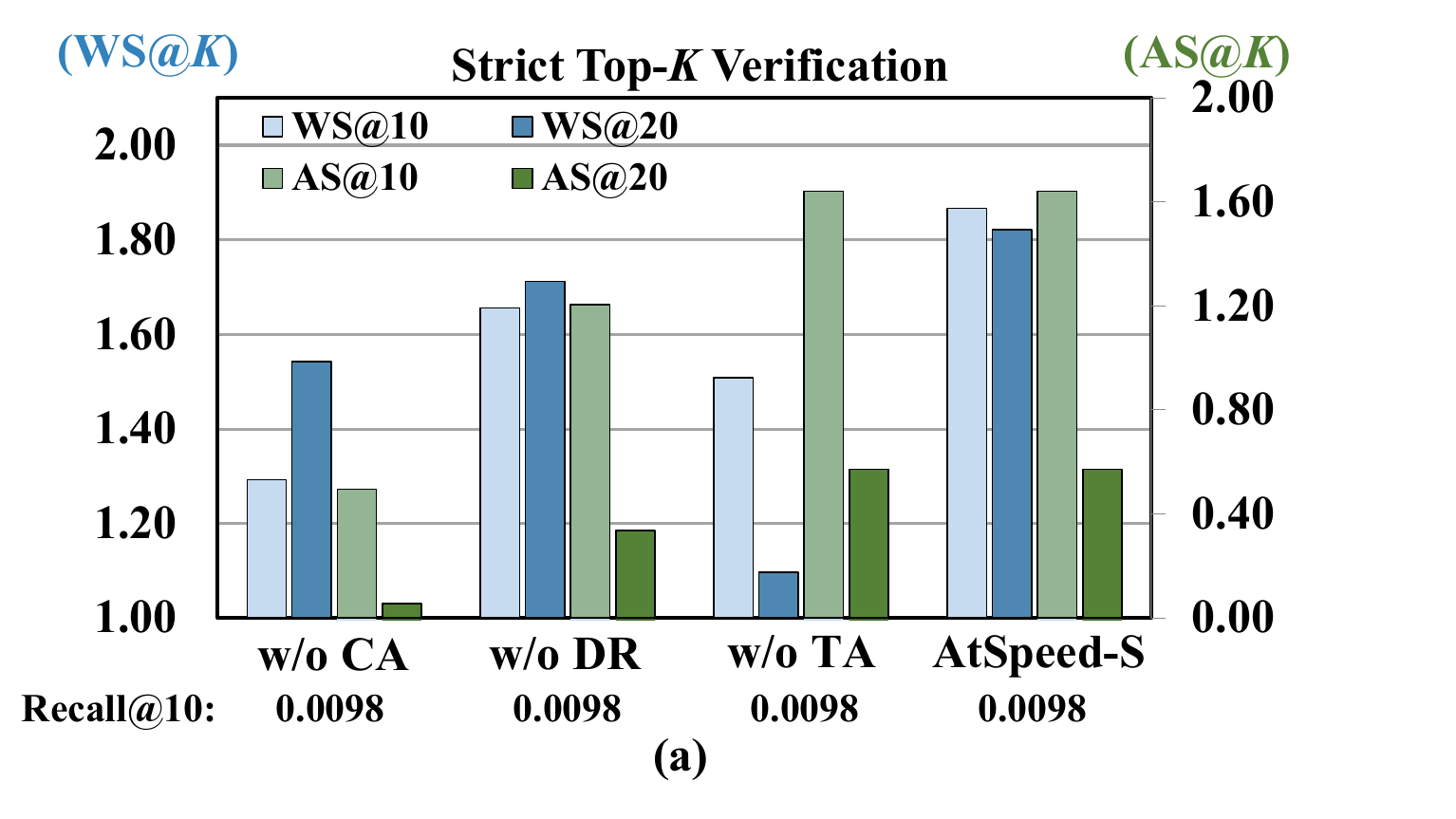}} 
  \subfigure{
    \includegraphics[height=1.4in]{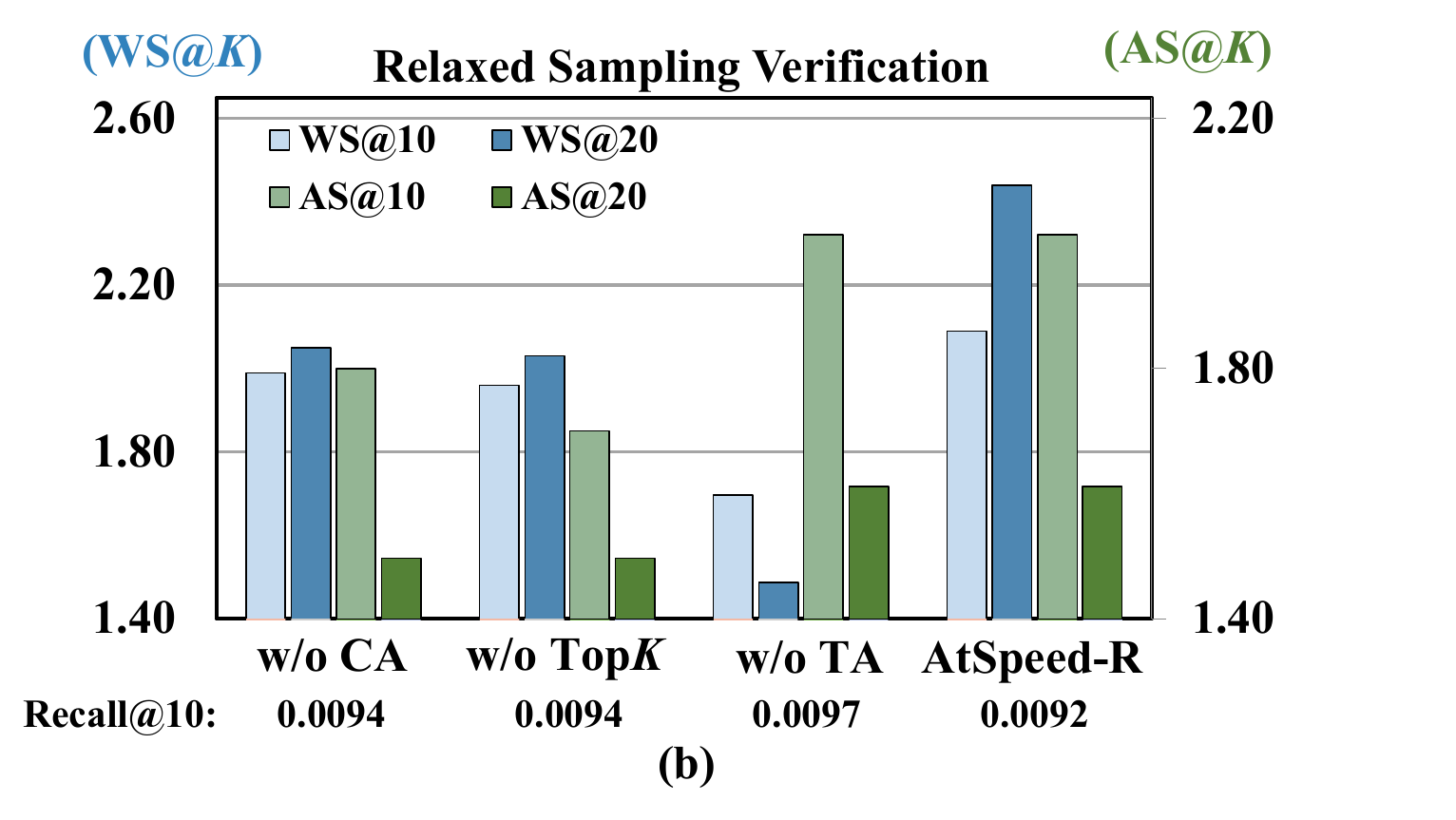}} 
  \caption{Ablation study of AtSpeed-S and AtSpeed-R on Beauty dataset. 
  }
  \label{fig:ablation}
\end{figure}

\begin{figure}[t]
\vspace{-0.2cm}
\setlength{\abovecaptionskip}{-0.1cm}
\setlength{\belowcaptionskip}{-0.5cm}
  \centering 
  \hspace{-0.15in}
  \subfigure{
    \includegraphics[height=1.38in]{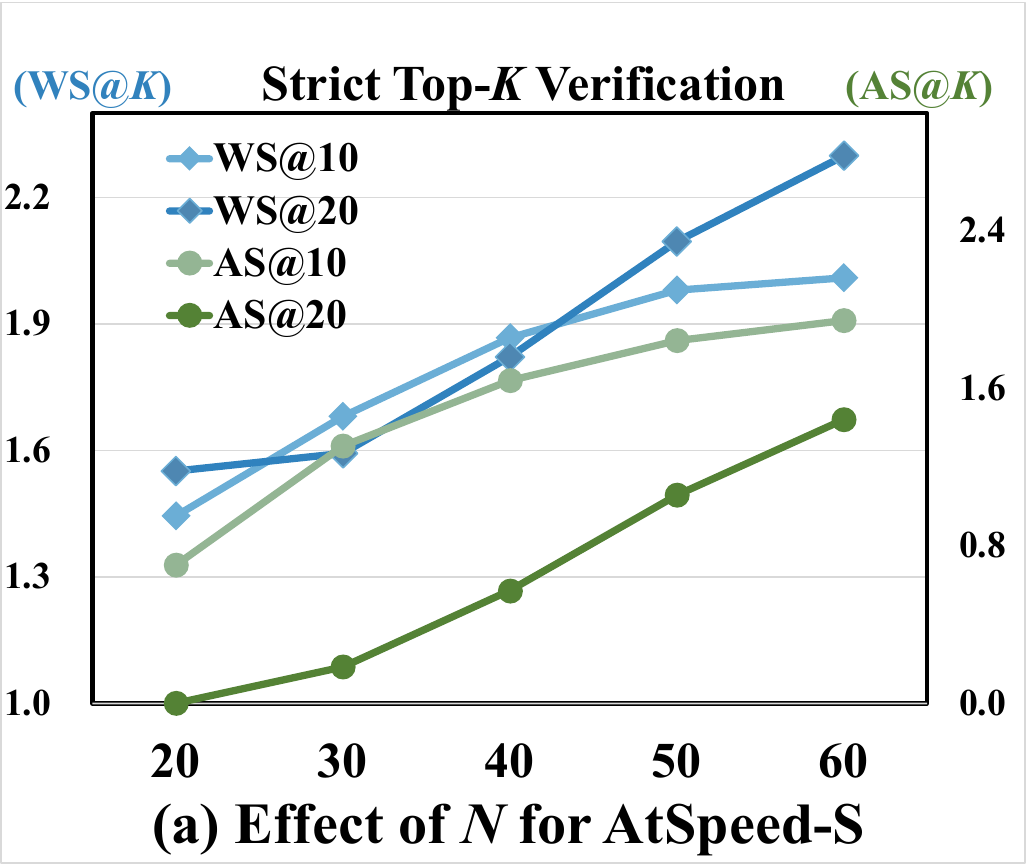}} 
  \subfigure{
    \includegraphics[height=1.38in]{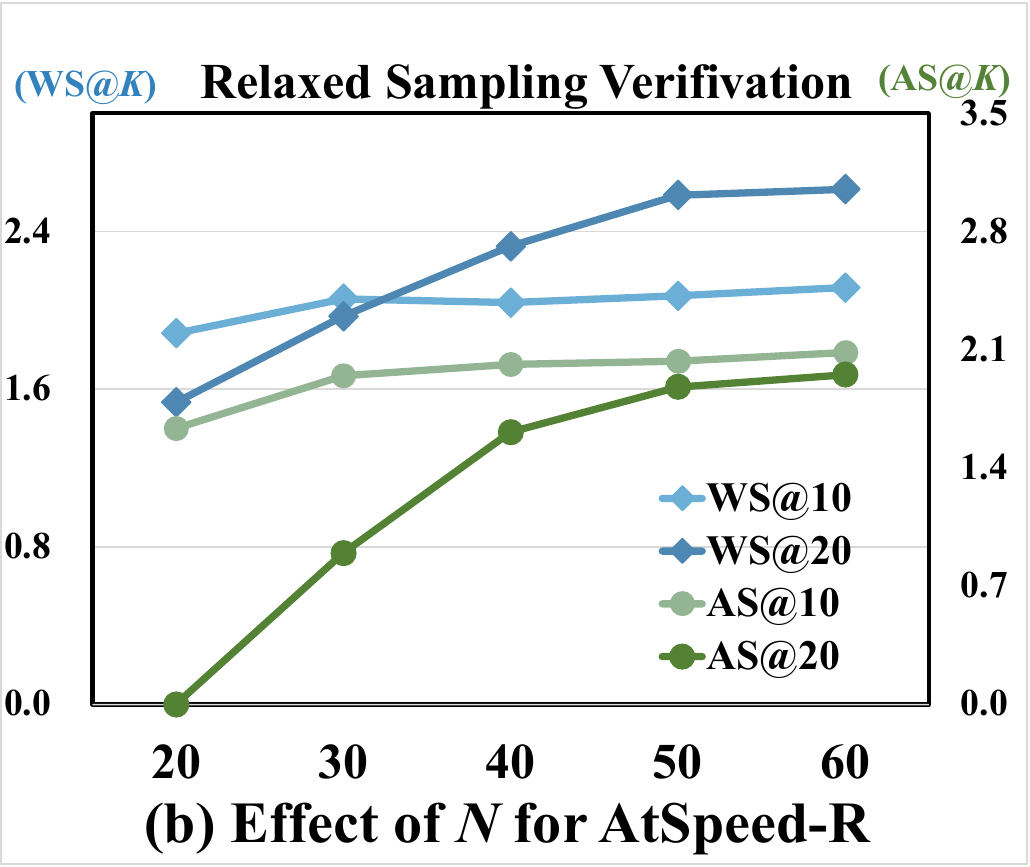}} 
  \subfigure{
    \includegraphics[height=1.38in]{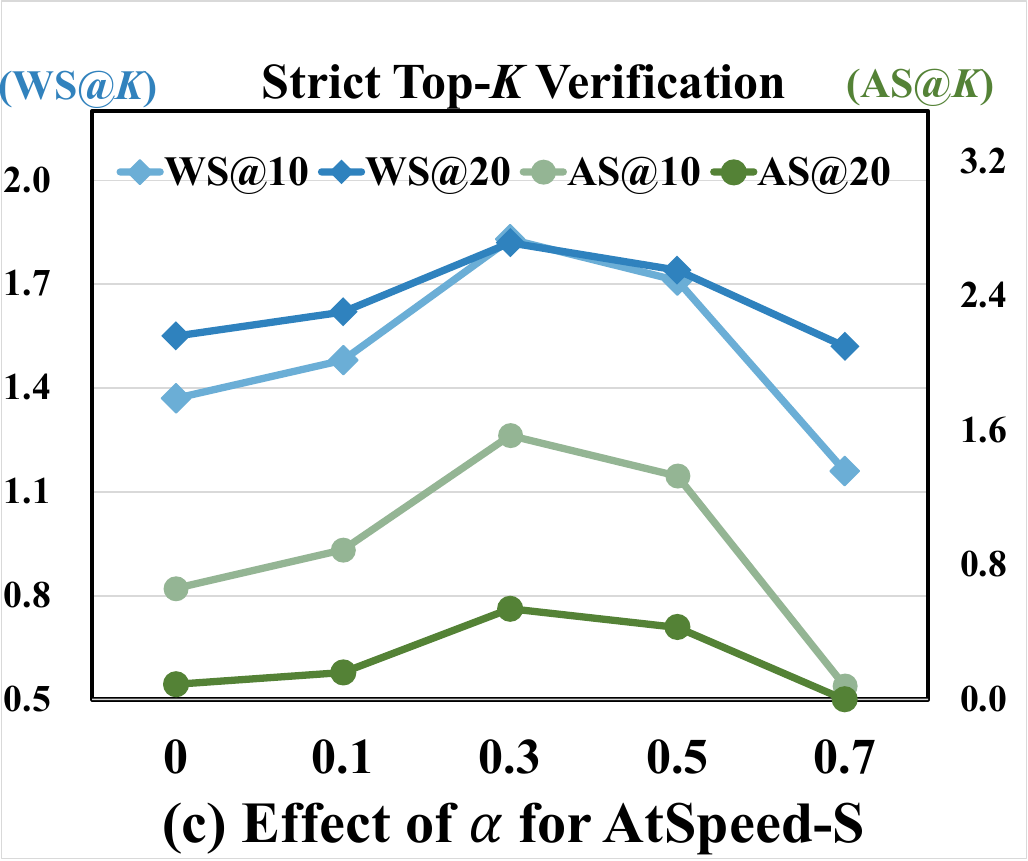}} 
  \caption{Effect of draft beam size $N$ and the alignment strength $\alpha$ on Beauty dataset.}
  \label{fig:hp_N_S-R_and_alpha-S}
\end{figure}

\textbf{Effect of Draft Beam Size $\bm{N}$ (RQ3).} 
To study how the draft beam size $N$ affects the acceleration of AtSpeed, we vary $N$ from 20 to 60 and test AtSpeed-S under strict top-$K$ verification (Figure~\ref{fig:hp_N_S-R_and_alpha-S}) and AtSpeed-R under relaxed sampling verification, respectively. 
From the figures, we can observe that 
1) from $N=20$ to $N=60$, the acceptance rate constantly increases for both AtSpeed-S and AtSpeed-R. 
This is reasonable as a larger beam size can produce more drafted sequences for verification, thus enhancing the acceptance rate. 
Moreover, 
2) the walltime speedup continues to improve steadily as $N$ gets larger. 
We attribute this phenomenon to the utilization of tree-based attention, which efficiently verifies drafted sequences with a single flattened sequence (refer to Appendix~\ref{app:exp} for detailed speedup analysis of tree-based attention). 
Nevertheless, 
3) the improvement of acceptance rate and speedup gradually slows down (\ie decreasing slope), indicating continuously increasing $N$ is approaching a balance between the enhanced acceptance rate and the increased overhead.

\textbf{Effect of Alignment Strength $\bm{\alpha}$ (RQ3).} 
We train the draft model with different alignment strength $\alpha$ and present the results of AtSpeed-S in Figure~\ref{fig:hp_N_S-R_and_alpha-S}(c)\footnote{More analysis of AtSpeed is presented in Appendix~\ref{app:exp_additional_results} to save space.}. 
From the figure, it is observed that 
the integration of alignment loss (incrementally increased from 0) improves performance, which validates the necessity of alignment to improve acceptance rate. 
However, 
2) blindly emphasizing the alignment loss may be ineffective, as the model struggles to capture the output of the target LLM due to the limited expressiveness of the small-sized model without the knowledge of recommendation task. 
Therefore, it is essential to strike a balance between the task-specific objective and the alignment objective. 
Empirically, we recommend setting $\alpha=0.5$ for alignment training.



\section{Related Work}\label{sec:related_work}

\textbf{Speculative Decoding}. 
As an efficient LLM decoding technique, SD~\citep{leviathan2023fast,fubreak,duglide,liu2024online,chen2023cascade,sun2024spectr} has been proposed to reduce the LLM calls by efficiently drafting multiple future tokens and then verifying them in parallel. 
Early studies focus on SD in greedy decoding, where the draft model generates one token at each generation step for LLM to verify if they match the LLM's top-$1$ token. 
Later on, some studies extend SD to nucleus sampling to accelerate LLM decoding without changing its output distribution~\citep{leviathan2023fast}. 
However, the early attempts simply draft one token at each step for verification, which significantly limits the acceptance rate~\citep{xia2023speculative}. 
To improve the acceptance rate, subsequent work designs various relaxed verification strategies for greedy decoding such as enlarging LLM candidates with a tolerable score away from the top-$1$ token~\citep{xia2023speculative} and measuring the distance of the probability between the draft model and the target LLM~\citep{kim2024speculative}. 
Concurrently, some studies propose different methods to draft multiple candidates for different sampling, as seen in 
Medusa~\citep{caimedusa}, EAGLE~\citep{li2024eagle} and so on. 
While previous work primarily applies SD for NLP tasks and follows the $N$-to-$1$ verification, 
our work extends SD to LLM-based generative recommendation tasks, addressing the unique challenge of $N$-to-$K$ verification, \ie perfect matching over the top-$K$ sequences.  
Closely related to our work, DistillSpec~\citep{zhou2024distillspec} investigates the use of various KD techniques to enhance the alignment between the draft model and the target LLM, to achieve high acceptance rate. 
However, they mainly focus on how KD helps for the $N$-to-$1$ verification. In contrast, this work studies the SD under $N$-to-$K$ verification for recommendation tasks, and proposes a framework to align the draft model with the target LLM on the top-$K$ sequences specifically for recommendation tasks. 
{While another conceptually similar work SpecGR~\citep{ding2024inductive} drafts unseen items for cold-start recommendations,  
this work aims at drafting beam sequences at every step for inference acceleration.} 


\textbf{Inference Acceleration for LLM-based Recommendation.}  
While LLM-based generative recommendations have shown remarkable performance~\citep{LLMSAGCN2024TOIS,xu2024study,zhang2024stealthy,lv2025collaboration,lin2024data,zhao2024llm} compared to traditional recommender models~\citep{ClusterGCF2024TOIS,gao2023cirs,ADDRL2024MM,zhang2021mining}, their practical implementation is impeded due to the high time latency of LLM decoding. 
Consequently, accelerating LLM decoding is imperative to facilitate real-world deployment of the powerful LLM-based recommender models. 
To tackle this issue, several approaches leverage KD to transfer knowledge from a teacher LLM to either a smaller language model~\citep{xu2024slmrec,wang2024can} or a traditional recommender model~\citep{cui2024distillation}. 
Additionally, some methods focus on designing more lightweight model architectures, such as narrow and deep transformer structures~\citep{mei2023lightlm}. However, the application of SD is underexplored in LLM-based recommendation. 
DARE~\citep{xi2024decoding} is a relevant study that applies SD in user feature augmentation for LLM-based recommendation. 
However, they focus on the LLM decoding for user feature generation that requires only one response from the target LLM, which is essentially SD under the $N$-to-$1$ verification. 
In contrast, our work explores decoding acceleration for top-$K$ ranked item recommendation, tackling the more difficult $N$-to-$K$ verification problem. 
In addition to these approaches, various other techniques, such as model quantization~\citep{xiao2023smoothquant}, model pruning~\citep{hassibi1992second}, and Key-Value (KV) cache optimization~\citep{zhang2024h2o}, can also enhance inference acceleration. 
Nonetheless, these methods are orthogonal to SD and can be effectively integrated with SD to further improve decoding efficiency. 
\section{Conclusion and Future Work}\label{sec:conclusion}
In this work, we formulated the speculative decoding task for LLM-based generative recommendation, highlighting the significant challenge of shifting from $N$-to-$1$ to $N$-to-$K$. 
To effectively accelerate decoding under $N$-to-$K$ verification, the key lies in drafting all top-$K$ sequences from the target LLM, which drives two objectives for SD: top-$K$ alignment and verification relaxation. 
To achieve this, we proposed an alignment framework called AtSpeed, along with a relaxed sampling verification strategy. AtSpeed designs two alignment objectives for the strict top-$K$ verification and relaxed sampling verification strategies. 
Extensive experiments show that AtSpeed achieves around 2$\times$ speedup on two real-world recommendation datasets. 
As an initial attempt work, it points out many promising directions for further explorations, such as  
1) more efficient drafting methods to reduce the overhead of drafting, such as non-LLM recommender models for efficient high-quality drafting that might lead to better recommendation accuracy under the relaxed sampling verification strategy, 
and 2) integration with other acceleration techniques such as memory access optimization. 
\section*{Ethics Statement}
Our research focuses on effectively boosting the inference efficiency of LLM-based generative recommendation with the AtSpeed framework, unlocking the significant practical potential of LLM-based recommender models. 
This work is technical and practical, with applications to LLM-based generative recommendation. 
We have fully considered the potential societal impacts and do not foresee any direct, immediate, or negative consequences. 
We are committed to the ethical dissemination of our findings and encourage their responsible use. 

\section*{Reproducibility Statement}
All the results in this work are reproducible. We provide all the necessary code to replicate our results in an \href{https://github.com/Linxyhaha/AtSpeed}{GitHub repository}. The repository includes environment configurations, run scripts, and other relevant materials. 
We discuss the experimental settings in Section~\ref{sec:exp_setting}, including implementation details such as the utilization of parameter-efficient training. Additionally, detailed experimental settings are provided in Appendix~\ref{app:exp_setting}.

\bibliography{iclr2025_conference}
\bibliographystyle{iclr2025_conference}

\clearpage
\appendix
\section{Appendix}\label{sec:appendix}
\subsection{Algorithm of Relaxed Sampling Verification}\label{app:algorithm}

\begin{algorithm}[h]\small
    \caption{SD step with Relaxed Sampling Verification}  
    \label{algo:relaxed_verification}
    \begin{algorithmic}[1]
    \Require Draft model $\mathcal{M}_q$, target LLM $\mathcal{M}_p$, prefix, target beam size $K$, draft beam size $N$
    \State $\mathcal{Y}_0^{q} \leftarrow \text{prefix}$,  
    $\mathcal{Y}_{\gamma+1}^{q} \leftarrow \emptyset$, \quad
    $\mathcal{Y}_\text{out}\leftarrow \emptyset$
    \For{$j=1$ to $\gamma$}
        \State $q_j 
 \leftarrow\mathcal{M}_q(\mathcal{Y}_{j-1}^q)$, 
 $\mathcal{Y}_j^q \leftarrow \text{Sample } K \text{ sequences from } q_j$
\EndFor
\State $p_1, p_2, \dots, p_{\gamma+1} \leftarrow \mathcal{M}_p(\mathcal{Y}_0^q),\mathcal{M}_p(\mathcal{Y}_1^q), \dots, \mathcal{M}_p(\mathcal{Y}_{\gamma}^q)$ \algorithmiccomment{Run $\mathcal{M}_p$ in parallel}

\For{$j=1$ to $\gamma+1$} 
\State $\mathcal{Y}_\text{rej} \leftarrow \emptyset$
\For{$i=1$ to $K$} \algorithmiccomment{Verify each sequence based on independent modeling}
    \State $r_i\sim U(0,1)$, $\bm{y}_i\leftarrow \text{The } i\text{-th sequence from }\mathcal{Y}_j^q$
    \If{$r_i\leq\frac{p(\bm{y}_i)}{q(\bm{y}_i)}$} \algorithmiccomment{Accept the sequence}
        \State $\mathcal{Y}_\text{out} \leftarrow \mathcal{Y}_{\text{out}} \bigcup \{\bm{y}_i\}$ 
    \Else \algorithmiccomment{Reject the sequence}
        \State $\mathcal{Y}_\text{rej} \leftarrow \mathcal{Y}_\text{rej} \bigcup \{\bm{y}_i\}$
    \EndIf
    \EndFor
    \If{$|\mathcal{Y}_{\text{out}}|>K$} \algorithmiccomment{Accept the step}
        \State $\mathcal{Y}_{\text{out}} \leftarrow $ \text{Randomly sample} K \text{sequences} from $\mathcal{Y}_{\text{out}}$
    \Else \algorithmiccomment{Reject the step and resample the sequences}
        \State $p'_j \leftarrow \text{norm}(\max(0, p_j-q_j))$
        \State $p'_j(\bm{y}_i) \leftarrow 0$ for $\bm{y}\in \mathcal{Y}_{\text{out}}$ 
        \State
        $\mathcal{Y}_{\text{resample}} \leftarrow$ Sample $B$ sequences from $p'_j$, \quad
        $B = K - |\mathcal{Y}_{\text{out}}|$
        \State $\mathcal{Y}_{\text{out}} \leftarrow \mathcal{Y}_{\text{out}} \bigcup \mathcal{Y}_{\text{resample}}$ 
    \EndIf

\EndFor

    \Ensure $\mathcal{Y}_\text{out}$
	\end{algorithmic}
\end{algorithm}

{In Algorithm~\ref{algo:relaxed_verification}, $\text{norm}(\max(0,p_j-q_j))=\frac{\max(0,p_j-q_j)}{\sum\max(0,p_j-q_j)}$.}



\subsection{Proof and detailed derivation}\label{app:proof_and_derivatives}

\subsubsection{With-replacement sampling approximation}\label{app:proof_distribution}
{
We aim to prove that the distribution of sampling without replacement is approximately equivalent to that of sampling with replacement. Our proof is mainly based on Stirling's approximation~\citep{dutka1991early}. 
In the following, we will first clarify notations, and introduce multivariate hypergeometric distribution and the multinomial distribution, which are used to model the sampling without replacement and sampling with replacement, respectively. 
We then present Stirling's approximation, and show the step-by-step proof. }

{
\textbf{Notations.} To model the sampling, we have the total population size $N$, sample size $n$, the category size $r$, the number of items in category $i$ in the population $K_i$, and the number of items in category $i$ in the samples $k_i$. 
In the case of sequence sampling in LLM decoding, the population includes every possible sequence. Every possible sequence is a unique category, and the sample size $n$ is the beam size, the population size is the total number of all possible sequences at each beam search step. 
Now we assume the population sizes go to infinity in such a way that 
$p_i=\frac{K_i}{N}$. 
And we have $\sum_{i=1}^{r}k_i=n$ and $\sum_{i=1}^{r} K_i = N$. }

{
\textbf{Multivariate Hypergeometric Distribution.} 
Formally, when sampling without replacement, the probability of drawing $k_1, k_2, \dots, k_r$ items from each category is given by the multivariate hypergeometric distribution 
\begin{equation}
\begin{aligned}
P_\text{hyper}(k_1,k_2,\dots,k_r) &= \frac{\prod_{i=1}^{r}\binom{K_i}{k_i}}{\binom{N}{n}} \\
&=\frac{\prod_{i=1}^{r}\frac{K_i!}{k_i!(K_i-k_i)!}}{\frac{N!}{n!(N-n)!}}.
\end{aligned}
\end{equation}}

{
\textbf{Multinomial Distribution.} 
Formally, when sampling with replacement, the probability follows the multinomial distribution as
\begin{equation}
\begin{aligned}
    P_\text{multi}(k_1, k_2, \dots, k_r) 
    = \frac{n!}{\prod_{i=1}^{r}k_i!}\prod_{i=1}^{r}p_i^{k_i}.
\end{aligned}
\end{equation}}

{
\textbf{Stirling's Approximation.} Stirling's approximation~\citep{dutka1991early} gives us the approximation of the logarithm of factorials as:
\begin{equation}
    \ln n! \approx n\ln n -n + \frac{1}{2}\ln(2\pi n).
\end{equation}}

{
\begin{theorem}
\label{thm:rank}
     When population size $N$ is large and sample size $n$ is small compared to $N$ (\ie $n\ll N$), the multivariate hypergeometric distribution approximates the multinomial distribution:
    \begin{equation}
        P_\text{hyper}(k_1, k_2, \dots, k_r) \approx P_\text{multi}(k_1,k_2, \dots, k_r).
    \end{equation}
\end{theorem}}

\begin{proof}
{
We can expand the logarithm of multivariate hypergeometric probability in factorials as follows:
\begin{equation}
\label{app_eqn:log_hyper}
\begin{aligned}
    &\ln P_\text{hyper}(k_1,k_2,\dots,k_r) =\ln \frac{\prod_{i=1}^{r}\frac{K_i!}{k_i!(K_i-k_i)!}}{\frac{N!}{n!(N-n)!}} \\ 
    =& \ln \prod_{i=1}^{r}\frac{K_i !}{k_i!(K_i-k_i)!} 
    - \ln \frac{N!}{n!(N-n)!} \\ 
    =& \sum_{i=1}^{r}\ln \frac{K_i!}{k_i!(K_i-k_i)!} 
    - \ln \frac{N!}{n!(N-n)!} \\
    =& \sum_{i=1}^{r} [\ln K_i! - \ln k_i! - \ln (K_i-k_i)!] 
    - [\ln N! - \ln n! - \ln (N-n)!].
\end{aligned}
\end{equation}
Using the Stirling's approximation, we have:
\begin{equation}
\label{app_eqn:stirling_for_proof}
\left\{
\begin{aligned}
    \ln K_i ! &\approx K_i \ln K_i - K_i + \frac{1}{2} \ln 2\pi K_i ,\\
    \ln k_i ! &\approx k_i \ln k_i - k_i + \frac{1}{2} \ln 2\pi k_i , \\
    \ln (K_i-k_i) ! &\approx (K_i-k_i) \ln (K_i-k_i) - (K_i-k_i) + \frac{1}{2} \ln 2\pi (K_i-k_i) ,\\
    \ln N! &\approx N\ln N -N + \frac{1}{2}\ln2\pi N ,\\
    \ln n! &\approx n\ln n -n + \frac{1}{2}\ln2\pi n ,\\
    \ln (N-n)! &\approx (N-n)\ln (N-n) -(N-n) + \frac{1}{2}\ln2\pi (N-n).
\end{aligned}
\right.
\end{equation}}

{
Then, we can substitute the logarithm of factorials with the approximation in Eq.(\ref{app_eqn:log_hyper}) as:
\begin{equation}
\begin{aligned}
     &\ln P_\text{hyper}(k_1,k_2,\dots,k_r) \\ 
     = & \sum_{i=1}^{r} [\ln K_i! - \ln k_i! - \ln (K_i-k_i)!] 
    - [\ln N! - \ln n! - \ln (N-n)!] \\
    = &\sum_{i=1}^{r} 
    [
    K_i \ln K_i - K_i + \frac{1}{2}\ln 2\pi K_i 
    - (k_i \ln k_i - k_i + \frac{1}{2}\ln 2\pi k_i) \\
    & \quad\quad - ((K_i-k_i)\ln (K_i-k_i) - (K_i-k_i) + \frac{1}{2}\ln2\pi (K_i-k_i) )] \\
    & \quad\quad - [N\ln N - N + \frac{1}{2}\ln 2\pi N 
    - (n\ln n - n + \frac{1}{2} \ln 2\pi n ) \\
    & \quad\quad - ((N-n)\ln (N-n) - (N-n) + \frac{1}{2} \ln 2\pi (N-n) )] \\ 
    = &\sum_{i=1}^{r} 
    [
     K_i \ln K_i - k_i \ln k_i - (K_i-k_i) \ln (K_i-k_i) 
     + \frac{1}{2} \ln 2\pi K_i 
      - \frac{1}{2} \ln 2\pi k_i 
      - \frac{1}{2}\ln 2\pi (K_i-k_i)
    ] \\
    &  \quad\quad - [N\ln N -n\ln n - (N-n) \ln (N-n) + \frac{1}{2} \ln 2\pi N - \frac{1}{2} \ln 2\pi n - \frac{1}{2} \ln 2\pi (N-n) ].
\end{aligned}
\end{equation}}

{
Since $N$ is a very large number and $n \ll N$, $k_i \ll K_i$, we have 
$\frac{1}{2} \ln 2\pi K_i - \frac{1}{2} \ln 2\pi k_i - \frac{1}{2}\ln 2\pi (K_i-k_i) \approx 0$ 
and 
$\frac{1}{2} \ln 2\pi N - \frac{1}{2} \ln 2\pi n - \frac{1}{2}\ln 2\pi (N-n) \approx 0$. 
Then, the logarithm of multivariate hypergeometric distribution is approximated as:
\begin{equation}
\label{app_eqn:log_hyper_after_stirling}
    \ln P_\text{hyper} \approx 
    \sum_{i=1}^{r} [K_i \ln K_i - k_i \ln k_i - (K_i-k_i) \ln (K_i-k_i)]
    - [N\ln N -n\ln n - (N-n) \ln (N-n)]. 
\end{equation}}

{
Since $k_i$ is small compared to $K_i$, we can expand $\ln (K_i-k_i)$ using Taylor expansion
\begin{equation}
\ln (K_i-k_i) = \ln K_i - \frac{k_i}{K_i} - \frac{1}{2}(\frac{k_i}{K_i})^{2} + \dots,
\end{equation}
where we can neglect the high-order terms and obtain
\begin{equation}
    \ln (K_i-k_i) \approx \ln K_i - \frac{k_i}{K_i}.
\end{equation}
Similarly, for $\ln (N-n)$, we have
\begin{equation}
    \ln (N-n) \approx \ln N - \frac{n}{N}.
\end{equation}}

{
We can then substitute $\ln (K_i-k_i)$ and $\ln (N-n)$ in logarithm of multivariate hypergeometric distribution (Eq.(\ref{app_eqn:log_hyper_after_stirling})) and obtain
\begin{equation}\small
\begin{aligned}
    \ln P_\text{hyper} 
    &\approx 
    \sum_{i=1}^{r} [K_i \ln K_i - k_i \ln k_i - (K_i-k_i) \ln (K_i-k_i)]
    - [N\ln N -n\ln n - (N-n) \ln (N-n)] \\ 
    & = \sum_{i=1}^{r} [K_i\ln K_i - k_i \ln k_i - (K_i-k_i) (\ln K_i + \frac{k_i}{K_i})]
    - [N\ln N - n\ln n - (N-n)(\ln N + \frac{n}{N})] \\ 
    & = \sum_{i=1}^{r}
    [K_i\ln K_i - k_i\ln k_i - (K_i\ln K_i - k_i\ln K_i + k_i - \frac{k_i^2}{K_i}) ] 
    - [N\ln N - n\ln n - (N\ln N - n\ln N + n - \frac{n^2}{N})] \\ 
    & = \sum_{i=1}^{r}
    [k_i \ln \frac{K_i}{k_i} + \frac{k_i^2}{K_i} - k_i] 
    - [n\ln \frac{N}{n} + \frac{n^2}{N} -n] \\
    & = \sum_{i=1}^{r} 
    [k_i \ln \frac{K_i}{k_i} + \frac{k_i^{2}}{K_i}]   
    - [n \ln \frac{N}{n}+ \frac{n^2}{N}]. \quad\quad (\text{we have} -\sum_{i=1}^{r}k_i+n=0 \text{ in last expression})
\end{aligned}
\end{equation}}

{
We then relate $K_i$ to $N$ and $p_i$. Since $K_i=Np_i$, we have $\ln \frac{K_i}{k_i}=\ln \frac{Np_i}{k_i}$. 
We also have $n=\sum_{i=1}^{r}k_i$. 
Then, we can express the logarithm of hypergeometric distribution in terms of $p_i$ and $k_i$ as
\begin{equation}
\begin{aligned}
\ln P_\text{hyper} 
    & \approx \sum_{i=1}^{r} [k_i\ln \frac{K_i}{k_i}+ \frac{k_i^2}{K_i}] - [n\ln \frac{N}{n} + \frac{n^2}{N}] \\
    & = \sum_{i=1}^{r} [k_i\ln \frac{Np_i}{k_i} + \frac{k_i^2}{Np_i}] - [n \ln \frac{N}{n}+\frac{n^2}{N}] \\
    & =\sum_{i=1}^{r} [k_i (\ln N + \ln p_i - \ln k_i)+ \frac{k_i^2}{Np_i}] - 
    [n\ln \frac{N}{n} + \frac{n^2}{N}] \\ 
    & = \sum_{i=1}^{r}
    k_i\ln N + \sum_{i=1}^{r}[k_i(\ln p_i - \ln k_i)+ \frac{k_i^2}{Np_i}] - n\ln \frac{N}{n} - \frac{n^2}{N} \\
    & = n\ln N - n\ln N + n\ln n - \frac{n^2}{N} + \sum_{i=1}^{r} [k_i(\ln p_i - \ln k_i) + \frac{k_i^2}{Np_i}] \quad (\text{we have } \sum_{i=1}^{r}k_i=n \text{ in last expression})\\
    & = \sum_{i=1}^{r} [k_i(\ln p_i - \ln k_i)+ \frac{k_i^2}{Np_i}] + n\ln n - \frac{n^2}{N} \\
    &= n\ln n - \sum_{i=1}^{r}k_i \ln k_i + \sum_{i=1}^{r} k_i \ln p_i - \frac{n^2}{N} + \sum_{i=1}^{r}\frac{k_i^2}{Np_i}.
\end{aligned}
\end{equation}}

{
Now the approximation of the logarithm of multivariate hypergeometric distribution has been finished. Similarly, we approximate the multinomial distribution with Stirling's approximation. 
We expand the logarithm of multinomial distribution as
\begin{equation}
\begin{aligned}
    \ln P_\text{multi}(k_1, k_2, \dots, k_r) 
    & = \ln \frac{n!}{\prod_{i=1}^{r}k_i!}\prod_{i=1}^{r}p_i^{k_i} \\ 
    & = \ln n! - \sum_{i=1}^{r} \ln k_i! + \sum_{i=1}^{r}k_i \ln p_i.
\end{aligned}
\end{equation}}

{
Using Stirling's approximation, we have $\ln n ! \approx n\ln n - n$ and $\ln k_i! \approx k_i \ln k_i -k_i$. We then substitute $\ln n!$ and $\ln k_i!$ with the approximation and obtain
\begin{equation}
\begin{aligned}
    \ln P_\text{multi} &\approx
    n\ln n - n -\sum_{i=1}^{r} (k_i \ln k_i - k_i) + \sum_{i=1}^{r} k_i \ln p_i \\
    & = n\ln n - n - \sum_{i=1}^{r}k_i\ln k_i + \sum_{i=1}^{r}k_i + \sum_{i=1}^{r}k_i \ln p_i. 
\end{aligned}
\end{equation}}

{
Since we have $\sum_{i=1}^{r}k_i = n$, we have
\begin{equation}
\begin{aligned}
    \ln P_\text{multi} &\approx
    n\ln n - n - \sum_{i=1}^{r}k_i\ln k_i + n + \sum_{i=1}^{r}k_i \ln p_i \\ 
    & = n\ln n - \sum_{i=1}^{r}k_i \ln k_i + \sum_{i=1}^{r}k_i\ln p_i. 
\end{aligned}
\end{equation}}

{
Now, comparing the approximated logarithm of multinomial distribution with the approximated logarithm of multivariate hypergeometric distribution, we have
\begin{equation}
\begin{aligned}
    \ln P_\text{multi} \approx \ln P_\text{hyper} + \frac{n^2}{N} - \sum_{i=1}^{r}\frac{k_i^2}{Np_i}. 
\end{aligned}
\end{equation}
Note that the term $\frac{n^2}{N} - \sum_{i=1}^{r}\frac{k_i^2}{Np_i}$ is negalectable when $N$ is large and $k_i$ and $n$ are small compared to N. Therefore, we show that when the population size $N$ is large (\eg all sequences for sampling) and $k_i, n$ are small (\eg $k_i=1$ or $0$ since each sequence represents a category and $n$ is usually less than 20 in LLM-based recommendation), the multivariate hypergeometric distribution can be approximated to the multinomial distribution. 
That is, sampling without replacement is approximately equivalent to sampling with replacement. }
\end{proof}

\subsubsection{Derivation of Alignment objectives of AtSpeed-S}\label{app:derivation_strict}
{
To maximize the acceptance rate under the strict top-$K$ verification, we consider: 
\begin{equation}\small
\label{app:eqn_rkld_1}
\begin{aligned}
    \mathop{\arg\max}_{\theta\in\Theta} 
    \quad  &\mathbb{E}_{\bm{y}\sim\mathcal{Y}_{q_\theta}} \frac{p(\bm{y})}{p(\bm{y}_K)} \\
    \mathop{\arg\max}_{\theta\in\Theta} \quad  &\mathbb{E}_{\bm{y}\sim\mathcal{Y}_{q_\theta}} \log \frac{p(\bm{y})}{p(\bm{y}_K)} \\ 
    &=
    \sum_{\bm{y}} q(\bm{y}) \log p(\bm{y}) - \sum_{\bm{y}} q(\bm{y}) \log p(\bm{y}_K) \\
    &= 
    -\sum_{\bm{y}} q(\bm{y}) \log \frac{1}{p(\bm{y})} - \sum_{\bm{y}} q(\bm{y})\log p(\bm{y}_K).
\end{aligned}
\end{equation}
We can add $(\sum_{\bm{y}} q(\bm{y}) \log q(\bm{y}) - \sum_{\bm{y}} q(\bm{y}) \log q(\bm{y}) )$ to the expression since $\sum_{\bm{y}} q(\bm{y}) \log q(\bm{y}) - \sum_{\bm{y}} q(\bm{y}) \log q(\bm{y})=0$. 
After adding the term with some rearrangement, we obtain 
\begin{equation}\small
\label{app_eqn:strict-seq-objective}
\begin{aligned} 
    &-\sum_{\bm{y}} q(\bm{y}) \log \frac{1}{p(\bm{y})} - \sum_{\bm{y}} q(\bm{y})\log p(\bm{y}_K) +  (\sum_{\bm{y}} q(\bm{y}) \log q(\bm{y}) - \sum_{\bm{y}} q(\bm{y}) \log q(\bm{y}) )\\ 
    &= 
    -\sum_{\bm{y}} q(\bm{y}) \log \frac{1}{p(\bm{y})} - \sum_{\bm{y}} q(\bm{y}) \log q(\bm{y})  + \sum_{\bm{y}} q(\bm{y}) \log q(\bm{y})  - \sum_{\bm{y}} q(\bm{y})\log p(\bm{y}_K)\\ 
    &= 
    - \sum_{\bm{y}} q(\bm{y}) \log \frac{q(\bm{y})}{p(\bm{y})} 
    + \sum_{\bm{y}} q(\bm{y}) \log q(\bm{y}) 
    - \sum_{\bm{y}} q(\bm{y}) \log p(\bm{y}_K) \\ 
    &= 
    -\sum_{\bm{y}}q(\bm{y}) \log \frac{q(\bm{y})}{p(\bm{y})} 
    + \sum_{\bm{y}} q(\bm{y}) \log \frac{q(\bm{y})}{p(\bm{y}_K)}, 
\end{aligned}
\end{equation}
where $p(y_K)=p(\bm{y}_{K,t}|\bm{y}_{K,<t})$. According to the derivation, we obtain the alignment objective as in Eq.(\ref{eqn:strict_opt_first_expand}). 
We now consider context in LLM-based recommendation. We omit input $\bm{x}$ and denote $c_{<t}=\bm{y}_{<t}$, referred to as context, then we have 
$q(\bm{y})=\prod_{t}q(y_t|c_{<t})$ and $p(\bm{y})=\prod_{t}p(y_t|c_{<t})$. 
We then can substitute $q(\bm{y})$ with $\prod_{t}q(y_t|c_{<t})$ and $p(\bm{y})$ with $\prod_{t}p(y_t|c_{<t})$ and rewrite the objective in Eq.(\ref{app_eqn:strict-seq-objective}) as:  
\begin{equation}\small\label{app_eqn:step-wise-expansion-1}
\begin{aligned}
    & -\sum_{\bm{y}}q(\bm{y}) \log \frac{q(\bm{y})}{p(\bm{y})} 
    + \sum_{\bm{y}} q(\bm{y}) \log \frac{q(\bm{y})}{p(\bm{y}_K)} \\
    =&  -\sum_{\bm{y}} q(\bm{y}) \log \frac{\prod_t q(y_t|c_{<t})}{\prod_t p(y_t|c_{<t})}
    + \sum_{\bm{y}} q(\bm{y}) \log \frac{\prod_t q(y_t|c_{<t})}{\prod_t p(\bm{y}_{K,t}|\bm{y}_{K,<t})} \\
    =&  - \sum_{\bm{y}} q(\bm{y}) [\log \prod_t q(y_t|c_{<t}) - \log \prod_t p(y_t|c_{<t})]
    + \sum_{\bm{y}} q(\bm{y}) [\log \prod_t q(y_t|c_{<t}) - \log \prod_t p (\bm{y}_{K,t}|\bm{y}_{K,<t})] \\
    =&  - \sum_{\bm{y}} q(\bm{y}) [\sum_t \log  q(y_t|c_{<t}) - \sum_t \log p(y_t|c_{<t})]
    + \sum_{\bm{y}} q(\bm{y}) [\sum_t \log q(y_t|c_{<t}) - \sum_t \log p (\bm{y}_{K,t}|\bm{y}_{K,<t})] \\
    =&  - \sum_{\bm{y}} q(\bm{y}) \sum_t \log  \frac{q(y_t|c_{<t})}{p(y_t|c_{<t})}
    + \sum_{\bm{y}} q(\bm{y}) \sum_t \log \frac{q(y_t|c_{<t})}{p (\bm{y}_{K,t}|\bm{y}_{K,<t})}  \\
    =& - \sum_t \sum_{\bm{y}} q(\bm{y})  \log  \frac{q(y_t|c_{<t})}{p(y_t|c_{<t})}
    + \sum_t \sum_{\bm{y}} q(\bm{y})  \log \frac{q(y_t|c_{<t})}{p (\bm{y}_{K,t}|\bm{y}_{K,<t})}.  
\end{aligned}
\end{equation}
Since $\bm{y}=(y_1, y_2, \dots, y_n) = (\bm{y}_{<t}, y_t, \bm{y}_{>t})$ is a sequence, 
we can decompose $\sum_{\bm{y}} q(\bm{y})$ into nested sum 
$\sum_{\bm{y}_{<t}} 
    \sum_{y_t} 
\sum_{\bm{y}_{>t}} 
q(\bm{y}) = \sum_{\bm{y}_{<t}} 
    \sum_{y_t} 
\sum_{\bm{y}_{>t}} 
q(\bm{y}_{<t}, y_t, \bm{y}_{>t})$, where the nested sum over three parts of $\bm{y}$, \ie $\bm{y}_{<t}, \bm{y}_t$, and $\bm{y}_{>t}$ can cover all possible sequence $\bm{y}$. Since  
$q(\bm{y}_{<t}, y_t, \bm{y}_{>t})=q(\bm{y}_{<t})q(y_t|c_{<t})q(\bm{y}_{>t}|c_{\le t})$, 
we can rewrite the nested sum of $q(\bm{y}_{<t}, y_t, \bm{y}_{>t})$ over $\bm{y}$:  
\begin{equation}
\begin{aligned}
& \sum_{\bm{y}} q(\bm{y}) \\
    =&\sum_{\bm{y}_{<t}} 
    \sum_{y_t} 
\sum_{\bm{y}_{>t}} q(\bm{y}_{<t})q(y_t|c_{<t})q(\bm{y}_{>t}|c_{\le t}) \\
= &\sum_{\bm{y}_{<t}} 
\sum_{y_t} q(\bm{y}_{<t})q(y_t|c_{<t})
\sum_{\bm{y}_{>t}} q(\bm{y}_{>t}|c_{\le t}) \\
= &\sum_{\bm{y}_{<t}} q(\bm{y}_{<t})
    \sum_{y_t} q(y_t|c_{<t})
\sum_{\bm{y}_{>t}} q(\bm{y}_{>t}|c_{\le t}).
\end{aligned}
\end{equation}}

{
We then can substitute $\sum_{\bm{y}} q(\bm{y})$ with $\sum_{\bm{y}_{<t}} q(\bm{y}_{<t})
    \sum_{y_t} q(y_t|\bm{y}_{<t})
\sum_{\bm{y}_{>t}} q(\bm{y}_{>t}|c_{\le t})$ 
into Eq.(\ref{app_eqn:step-wise-expansion-1}) and obtain 
\begin{equation}
\label{app_eqn:step-wise-expansion-2}
\begin{aligned}
    & \quad - \sum_t \sum_{\bm{y}} q(\bm{y})  \log  \frac{q(y_t|c_{<t})}{p(y_t|c_{<t})}
    + \sum_t \sum_{\bm{y}} q(\bm{y})  \log \frac{q(y_t|c_{<t})}{p (\bm{y}_{K,t}|\bm{y}_{K,<t})} \\
    & = - \sum_t
    [
    \sum_{\bm{y}_{<t}} q(\bm{y}_{<t}) 
    \sum_{y_t} q(y_t|c_{<t}) 
    \sum_{\bm{y}_{>t}} q(\bm{y}_{>t}|c_{\le t})
    ]
    \log \frac{q(y_t|c_{<t})}{p(y_t|c_{<t})} \\ 
    & \quad\quad\quad\quad\quad\quad\quad\quad\quad\qquad
    + \sum_t
    [
    \sum_{\bm{y}_{<t}} q(c_{<t}) 
    \sum_{y_t} q(y_t|c_{<t}) 
    \sum_{\bm{y}_{>t}} q(\bm{y}_{>t}|c_{\le t})  
    ]
    \log \frac{q(y_t|c_{<t})}{p (\bm{y}_{K,t}|\bm{y}_{K,<t})} \\
    & = - \sum_t
    \sum_{\bm{y}_{<t}} q(\bm{y}_{<t}) 
    \sum_{y_t} q(y_t|c_{<t}) 
    \log \frac{q(y_t|c_{<t})}{p(y_t|c_{<t})} 
    \sum_{\bm{y}_{>t}} q(\bm{y}_{>t}|c_{\le t}) \\
    & \quad\quad\quad\quad\quad\quad\quad\quad\quad\qquad
    + \sum_t
    \sum_{\bm{y}_{<t}} q(\bm{y}_{<t}) 
    \sum_{y_t} q(y_t|c_{<t})  
    \log \frac{q(y_t|c_{<t})}{p (\bm{y}_{K,t}|\bm{y}_{K,<t})} 
    \sum_{\bm{y}_{>t}} q(\bm{y}_{>t}|c_{\le t}). \\ 
\end{aligned}
\end{equation}
Since $\sum_{\bm{y}_{>t}} q(\bm{y}_{>t}|c_{\le t})=1$, we can remove $\sum_{\bm{y}_{>t}} q(\bm{y}_{>t}|c_{\le t})$ in Eq.(\ref{app_eqn:step-wise-expansion-2}): 
\begin{equation}\small
\label{app_eqn:step-wise-expansion-3}
\begin{aligned}
    & - \sum_t
    \sum_{\bm{y}_{<t}} q(\bm{y}_{<t}) 
    \sum_{y_t} q(y_t|c_{<t}) 
    \log \frac{q(y_t|c_{<t})}{p(y_t|c_{<t})} 
    \sum_{\bm{y}_{>t}} q(\bm{y}_{>t}|c_{\le t}) \\
    & \quad\quad\quad\quad\quad\quad\quad\quad\quad\qquad
    + \sum_t
    \sum_{\bm{y}_{<t}} q(\bm{y}_{<t}) 
    \sum_{y_t} q(y_t|c_{<t})  
    \log \frac{q(y_t|c_{<t})}{p (\bm{y}_{K,t}|\bm{y}_{K,<t})} 
    \sum_{\bm{y}_{>t}} q(\bm{y}_{>t}|c_{\le t}) \\
    =&  - \sum_t
    \sum_{\bm{y}_{<t}} q(\bm{y}_{<t}) 
    \sum_{y_t} q(y_t|c_{<t}) 
    \log \frac{q(y_t|c_{<t})}{p(y_t|c_{<t})} 
    + \sum_t
    \sum_{\bm{y}_{<t}} q(\bm{y}_{<t}) 
    \sum_{y_t} q(y_t|c_{<t})  
    \log \frac{q(y_t|c_{<t})}{p (\bm{y}_{K,t}|\bm{y}_{K,<t})} \\ 
    =&  - \sum_t
    \sum_{\bm{y}_{<t}} q(\bm{y}_{<t}) 
    [
    \sum_{y_t} q(y_t|c_{<t}) 
    \log \frac{q(y_t|c_{<t})}{p(y_t|c_{<t})} 
    - 
    \sum_{y_t} q(y_t|c_{<t})  
    \log \frac{q(y_t|c_{<t})}{p (\bm{y}_{K,t}|\bm{y}_{K,<t})}
    ] \\ 
    =& - \sum_t \mathbb{E}_{\bm{y}_{<t}\sim q} 
    [
    \sum_{y_t} q(y_t|c_{<t}) 
    \log \frac{q(y_t|c_{<t})}{p(y_t|c_{<t})} 
    - 
    \sum_{y_t} q(y_t|c_{<t})  
    \log \frac{q(y_t|c_{<t})}{p (\bm{y}_{K,t}|\bm{y}_{K,<t})}
    ]. \\
\end{aligned}
\end{equation}
Now we have a step-wise alignment over all sequence from $q$, \ie $\mathbb{E}_{\bm{y}_{<t}\sim q}$.  
We aim to align the draft model with the target LLM at every beam search step. 
Notably, at each beam search step $T$, the sequence lengths are fixed and is independent with $t$. Therefore, we can rewrite Eq.(\ref{app_eqn:step-wise-expansion-3}) as: 
\begin{equation}\small
\begin{aligned}
    & - \sum_t \mathbb{E}_{\bm{y}_{<t}\sim q} 
    [
    \sum_{y_t} q(y_t|c_{<t}) 
    \log \frac{q(y_t|c_{<t})}{p(y_t|c_{<t})} 
    - 
    \sum_{y_t} q(y_t|c_{<t})  
    \log \frac{q(y_t|c_{<t})}{p (\bm{y}_{K,t}|\bm{y}_{K,<t})}
    ] \\
     =& - \mathbb{E}_{\bm{y}_{\le T}\sim q} \sum_{t=1}^{T}  
    [
    \sum_{y_t} q(y_t|c_{<t}) 
    \log \frac{q(y_t|c_{<t})}{p(y_t|c_{<t})} 
    - 
    \sum_{y_t} q(y_t|c_{<t})  
    \log \frac{q(y_t|c_{<t})}{p (\bm{y}_{K,t}|\bm{y}_{K,<t})}
    ]. 
\end{aligned}
\end{equation}
Since we aim to align at every step $T\in\{1,\dots,L\}$, where $L$ is the length of item identifier in LLM-based recommendation, 
we further normalize the expression by sequence length to prevent different scales on alignment loss across different steps. 
As such, for every step $T\in\{1,\dots,L\}$, the objective can be rewritten as 
\begin{equation}\small
\begin{aligned}
    & - \mathbb{E}_{\bm{y}_{\le T}\sim q} 
    \frac{1}{|y_T|}
    \sum_{t=1}^{|y_T|}  
    [
    \sum_{y_t} q(y_t) 
    \log \frac{q(y_t|c_{<t})}{p(y_t|c_{<t})} 
    - 
    \sum_{y_t} q(y_t)  
    \log \frac{q(y_t|c_{<t})}{p (\bm{y}_{K,t}|\bm{y}_{K,<t})}
    ]. 
\end{aligned}
\end{equation}
However, the expectation over the sequence space, (\ie $\mathbb{E}_{\bm{y}_{\le T}\sim q}$) is intractable, so we follow previous work~\citep{wen2023f,kim2016sequence} to approximate it by sampling top-$K$ sequences generated by draft model $\mathcal{M}_q$. 
Now we can rewrite our alignment objective for strict top-$K$ verification and obtain Eq.(\ref{eq:strict_opt}) in our paper: 
\begin{equation}\small
\begin{aligned}
    & \mathop{\arg\max}_{\theta\in\Theta}   
    - \mathbb{E}_{(\bm{x},\mathcal{Y})\sim D'} 
    \sum_{\bm{y}\in \mathcal{Y}}
    \frac{1}{|\bm{y}|}
    \sum_{t=1}^{\bm{|y|}}  
    [
    \sum_{y_t} q(y_t|c_{<t}) 
    \log \frac{q(y_t|c_{<t})}{p(y_t|c_{<t})} 
    - 
    \sum_{y_t} q(y_t|c_{<t})  
    \log \frac{q(y_t|c_{<t})}{p (\bm{y}_{K,t}|\bm{y}_{K,<t})}
    ] \\
    =&  \mathop{\arg\min}_{\theta\in\Theta}   
    \mathbb{E}_{(\bm{x},\mathcal{Y})\sim D'} 
    \sum_{\bm{y}\in \mathcal{Y}}
    \frac{1}{|\bm{y}|}
    \sum_{t=1}^{|\bm{y}|}  
    [
    \sum_{y_t} q(y_t|c_{<t}) 
    \log \frac{q(y_t|c_{<t})}{p(y_t|c_{<t})} 
    - 
    \sum_{y_t} q(y_t|c_{<t})  
    \log \frac{q(y_t|c_{<t})}{p (\bm{y}_{K,t}|\bm{y}_{K,<t})}
    ],
\end{aligned}
\end{equation}
where $\mathcal{Y}$ denotes the top-$K$ beam sequences generated from the mixture distribution of draft model and target LLM (see ``Alignment Objective'' paragraph in Section~\ref{sec:strict_verification}).}

\subsubsection{Derivatives of Acceptance Rate for Relaxed Sampling Verification.}\label{app:proof_relaxed_sampling} 
Based on the with-replacement sampling approximation, for every independent sampling, we have acceptance rate of a sequence 
\begin{equation}\label{eq:app_soft_acceptance_rate}\small
b=\mathbb{E}_{\bm{y}\sim q(\bm{y})} \begin{cases}
    1, &q(\bm{y}) \leq p(\bm{y}) \\
    \frac{p(\bm{y})}{q(\bm{y})}, &q(\bm{y}) > p(\bm{y})
\end{cases}
= \mathbb{E}_{y\sim q(\bm{y})} \min(1,\frac{p(\bm{y})}{q(\bm{y})}) 
=1-\text{TVD}(p(\bm{y}),q(\bm{y})), 
\end{equation}
which can be easily extended from~\citep{leviathan2023fast} as follows. 
\begin{deff}\small
    $TVD(p,q) 
    = \sum_{\bm{y}}|p(\bm{y})-M(\bm{y})| = \sum_{\bm{y}}|q(\bm{y})-M(\bm{y})|$, where $M(q) = \frac{p(\bm{y})+q(\bm{y})}{2}$.
\end{deff}
\begin{lemma}\small
    $TVD=1-\sum_{\bm{y}}\text{min}(p(\bm{y}),q(\bm{y}))$.
\end{lemma}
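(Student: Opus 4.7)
The plan is to reduce the lemma to the elementary identity $|a-b| = a + b - 2\min(a,b)$ for nonnegative reals, and then invoke that $p$ and $q$ are probability distributions summing to $1$.

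First, I would unpack the definition with $M(\bm{y}) = \tfrac{p(\bm{y})+q(\bm{y})}{2}$:
\begin{equation*}
\text{TVD}(p,q) = \sum_{\bm{y}} \left| p(\bm{y}) - \frac{p(\bm{y})+q(\bm{y})}{2} \right| = \frac{1}{2}\sum_{\bm{y}} |p(\bm{y}) - q(\bm{y})|,
\end{equation*}
which is the standard half-$L^1$ form of TVD.

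Next I would apply the pointwise identity $|a-b| = a + b - 2\min(a,b)$, valid for any $a,b \ge 0$ (split cases $a \ge b$ and $a < b$), to each summand, obtaining
\begin{equation*}
\sum_{\bm{y}} |p(\bm{y}) - q(\bm{y})| = \sum_{\bm{y}} p(\bm{y}) + \sum_{\bm{y}} q(\bm{y}) - 2\sum_{\bm{y}} \min(p(\bm{y}), q(\bm{y})) = 2 - 2\sum_{\bm{y}} \min(p(\bm{y}), q(\bm{y})),
\end{equation*}
using $\sum_{\bm{y}} p(\bm{y}) = \sum_{\bm{y}} q(\bm{y}) = 1$. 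Dividing by $2$ yields the claim $\text{TVD}(p,q) = 1 - \sum_{\bm{y}} \min(p(\bm{y}), q(\bm{y}))$.

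There is no genuine obstacle here; the lemma is a short algebraic manipulation, and the only thing worth being careful about is noting the two uses of nonnegativity (to drop the absolute value on $p-M$ as $\tfrac{1}{2}|p-q|$, and to apply the $|a-b|$ identity). This identity is precisely what makes the coupling interpretation of TVD — namely the optimal transport mass $\sum_{\bm{y}} \min(p,q)$ that can be kept in place — rigorous, and it is exactly the form subsequently needed to match $b = \mathbb{E}_{\bm{y}\sim q}\min(1, p(\bm{y})/q(\bm{y})) = \sum_{\bm{y}} \min(p(\bm{y}), q(\bm{y}))$ in equation~(\ref{eq:app_soft_acceptance_rate}).
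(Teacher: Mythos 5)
Your proof is correct and follows essentially the same route as the paper's: rewrite $\sum_{\bm{y}}|p(\bm{y})-M(\bm{y})|$ as $\tfrac{1}{2}\sum_{\bm{y}}|p(\bm{y})-q(\bm{y})|$ and then apply $|a-b|=a+b-2\min(a,b)$ together with normalization of $p$ and $q$. If anything, your write-up is cleaner: the paper's printed proof contains typographical slips (a dangling sum and $\min(p-q)$ in place of $\min(p,q)$), whereas you state the key identity and both uses of nonnegativity explicitly.
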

{
\begin{proof}\small
$TVD(p,q)=\sum_{y}|p(\bm{y})-M(\bm{y})| = \sum_{\bm{y}}\frac{|p(\bm{y})-q(\bm{y})|}{2}=1-\sum_{\bm{y}}\frac{p(\bm{y})+q(\bm{y})-|p(\bm{y})-q(\bm{y})|}{2}=1-\sum_{\bm{y}}
\text{min}(p(\bm{y})-q(\bm{y}))$.
\end{proof}}
\begin{theorem}
    $b = 1-TVD(p,q)$. 
\end{theorem}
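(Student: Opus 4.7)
The plan is to unfold the expectation in the definition of $b$ and recognize that it collapses to the sum of pointwise minima, at which point the preceding lemma delivers the claim in one line. Concretely, I would first write
\begin{equation*}
b \;=\; \mathbb{E}_{\bm{y}\sim q}\!\left[\min\!\left(1,\tfrac{p(\bm{y})}{q(\bm{y})}\right)\right] \;=\; \sum_{\bm{y}} q(\bm{y})\,\min\!\left(1,\tfrac{p(\bm{y})}{q(\bm{y})}\right).
\end{equation*}
Since $q(\bm{y})\ge 0$, I can push the $q(\bm{y})$ factor inside the minimum, giving $\sum_{\bm{y}} \min(q(\bm{y}),\,p(\bm{y}))$. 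Applying the lemma $TVD(p,q)=1-\sum_{\bm{y}}\min(p(\bm{y}),q(\bm{y}))$ then yields $b = 1 - TVD(p,q)$, as required.

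The only subtlety worth addressing explicitly is the case $q(\bm{y})=0$. In that event the summand $q(\bm{y})\min(1,p(\bm{y})/q(\bm{y}))$ should be read as $0$ by the usual convention (it contributes nothing to the draft distribution and thus nothing to the expectation), which also matches $\min(p(\bm{y}),q(\bm{y}))=0$. With this convention, the passage from $\sum_{\bm{y}} q(\bm{y})\min(1,p(\bm{y})/q(\bm{y}))$ to $\sum_{\bm{y}}\min(p(\bm{y}),q(\bm{y}))$ is an equality term by term.

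I do not foresee a genuine obstacle here: all substantive content is packaged in the lemma and the with-replacement sampling approximation already invoked earlier, so the theorem is essentially a two-line corollary. The main thing to be careful about in writing is to state clearly why $q(\bm{y})$ may be absorbed inside the $\min$ (nonnegativity), and to reference the lemma at the final step rather than re-deriving it.
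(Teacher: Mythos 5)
Your proof is correct and follows the same route the paper intends: unfold $\mathbb{E}_{\bm{y}\sim q}\min(1,p(\bm{y})/q(\bm{y}))$ into $\sum_{\bm{y}}\min(p(\bm{y}),q(\bm{y}))$ and invoke the lemma $TVD(p,q)=1-\sum_{\bm{y}}\min(p(\bm{y}),q(\bm{y}))$. In fact your write-up is slightly more careful than the paper's (which leaves the absorption of $q(\bm{y})$ into the minimum and the $q(\bm{y})=0$ convention implicit, deferring to Leviathan et al.), so no changes are needed.
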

\begin{proof}\small
$b=\mathbb{E}_{\bm{y}\sim q(\bm{y})} \begin{cases}
    1, &q(\bm{y}) \leq p(\bm{y}) \\
    \frac{p(\bm{y})}{q(\bm{y})}, &q(\bm{y}) > p(\bm{y})
\end{cases}
= \mathbb{E}_{y\sim q(\bm{y})} \min(1,\frac{p(\bm{y})}{q(\bm{y})}) 
= \sum_{\bm{y}}\min(1,\frac{p(\bm{y})}{q(\bm{y})}) 
= 1-\text{TVD}(p(\bm{y}),q(\bm{y})).$
\end{proof}

Based on the with-replacement sampling approximation (see Appendix~\ref{app:proof_distribution} for proof), we can obtain the acceptance rate of a step with $K$ accepted sequences as $\beta=\prod_{K}b$. 
Similarly, we could also extend the distribution equivalence for sampling a sequence between the draft model and the target LLM from Appendix A.1 in ~\citep{leviathan2023fast}. 
As such, when the sampling-based beam search is performed with a beam size $K$, the joint distribution of the output $K$ sequences from the SD under the relaxed sampling verification is approximately equivalent to the counterparts from the target LLM, according to the approximation of with-replacement sampling, \ie independent sampling. 

\subsection{Experimental Details and Analysis}\label{app:exp}

\subsubsection{Detailed Experimental Settings}\label{app:exp_setting}

\begin{table}[t]
\centering
\setlength{\abovecaptionskip}{0.0cm}
\setlength{\belowcaptionskip}{0cm}
\caption{Statistics of the datasets.}
\label{tab:dataset}
\setlength{\tabcolsep}{4mm}{
\resizebox{0.5\textwidth}{!}{
\begin{tabular}{@{}lcccc@{}}
\toprule
\textbf{Datasets} & \# \textbf{Users} & \# \textbf{Items} & \# \textbf{Interactions} & \textbf{Density} \\ \midrule\midrule
\textbf{Beauty}   & 19,383   & 12,035   & 138,870         & 0.06\%  \\
\textbf{Games}    & 49,156   & 17,332   & 342,329         & 0.04\%  \\ 
\textbf{MovieLens-1M} & 6,038 & 3,017 & 60,162 & 0.33\% \\ 
\textbf{Goodreads} & 2,437 & 4,667 & 80,744 & 0.71\% \\ 
\bottomrule
\end{tabular}
}}
\end{table}
\textbf{Datasets Details.} 
For both Beauty and Games, all interactions are sorted according to the global timestamps, and then split into training, validation, and testing sets with the ratio of 8:1:1. The dataset statistics is presented in Table~\ref{tab:dataset}. 
For the item identifier, we follow LC-Rec~\citep{zheng2023adapting} to set the length $L=4$, \ie the token sequence length of a generated item would be 4.  

\textbf{Implementation Details.} 
For draft model training, we use AdamW optimizer with batch size$=64$, learning rate=$0.001$, and a cosine scheduler with warmup step of $200$ to adjust
the learning rate. 
We train the draft model for 20 epochs on 4 NVIDIA RTX A5000 GPUs. Meanwhile, we search the alignment strength $\alpha$ in $\{0.1, 0.3, 0.5, 0.7\}$ and weight decay in $\{0.01, 0.1\}$. 
For efficient training, we utilize top-$K$ sequences from the last step of beam search to construct the alignment data $\mathcal{D}'$. 
Despite the utilization of beam sequences from different steps of beam search 
could potentially enhance the acceleration effects further, 
it exacerbates the training burden regarding time cost and storage cost (see Table~\ref{tab:training_trade-off}). 

{
\textbf{Recommendation Loss.} 
The recommendation loss used for the backbone LLM-based recommender model in our work is defined as 
\begin{equation}
    \mathcal{L}_\text{Rec} = - \frac{1}{N} \sum_{(\bm{x}, \bm{y}) \sim \mathcal{D}} \sum_{t=1}^{|\bm{y}|} \log \mathcal{M}_p({y}_t|\bm{y}_{<t}, \bm{x}),
\end{equation}
where $\bm{x}$ is user’s historical interactions, $\bm{y}$ is the user’s next interacted item identifier,  and $\mathcal{D}=\{(\bm{x}, \bm{y})\}$ denotes the original recommendation dataset. Nevertheless, our method is model-agnostic, thus the recommendation loss $\mathcal{L}_\text{Rec}$ can be 
can be substituted by any form of the loss function from the backend LLM-based recommender models.}

{
\textbf{Implementation of Tree-based Attention.}
We first compress the $N$ beam sequences into a single flattened sequence, and then construct the sparse tree-based attention mask for efficient target LLM verification. More precisely, given $\gamma N$ drafted beam sequences with different lengths, where $N$ is the draft beam size and $\gamma$ is the number of drafted beam steps, 
we first 1) flattened the beam sequences by sequentially adding the newly generated tokens from the beam sequences at each step. We denote the length of the flattened sequence as $L_f$. 
Then, based on the flattened sequence, 2) we construct an attention mask with the shape of $L_f \times L_f$. Specifically, each row in attention mask represents a specific beam sequence. For each row, we set the corresponding column of the last token as well as the preceding tokens in the beam sequence as 1, otherwise 0. 
For example, we set the beam size of 2 and collect the beam sequences of 3 steps as shown in Table~\ref{tab:beam_seq_example}.}

\begin{table}[h]
\centering
\begin{tabular}{llll}
step 1 beam 1: & a1 & & \\
step 1 beam 2: & a2 & & \\
step 2 beam 1: & a1 & b1 & \\
step 2 beam 2: & a2 & b2 & \\
step 3 beam 1: & a1 & b1 & c1 \\
step 3 beam 2: & a1 & b1 & c2 \\
\end{tabular}
\caption{{Beam sequences from 3 beam search steps with beam size 2. The flattened sequence is ``a1 a2 b1 b2 c1 c2''.}}
\label{tab:beam_seq_example}
\end{table}

{The flattened sequence will be ``a1 a2 b1 b2 c1 c2''. 
The constructed tree-based attention is shown in Figure~\ref{fig:overview}(c) of our manuscript, where each row represents a specific beam sequence. For each row, the tickled cell represents the preceding tokens and the last tokens of each beam. And the different colors represent the different steps of beam search. This flattened sequence and sparse tree-based attention enable efficient verification since it saves the repeated calculation of the same prefix across different beam sequences, \eg ``a1 b1''. }

\subsubsection{Additional Experiment Results.}\label{app:exp_additional_results}

\begin{table}[t]
\setlength{\abovecaptionskip}{0.05cm}
\setlength{\belowcaptionskip}{0cm}
\caption{Full results of walltime speedup (WS@$K$), accept steps (AS@$K$) between the baselines and AtSpeed instantiated on LC-Rec (LLaMA-7B) across two datasets. 
For each verification strategy, the best
results are highlighted in bold and the second-best results are underlined.}
\label{tab:app_full_acceleration_performance}
\setlength{\tabcolsep}{2mm}{
\resizebox{\textwidth}{!}{
\begin{tabular}{l|l|ccccc|ccccc}
\toprule
\multicolumn{12}{c}{\textbf{Beauty}} \\ \midrule
\multicolumn{1}{l|}{\textbf{Verification}} & \textbf{Method} & \multicolumn{1}{l}{\textbf{WS@1}} & \multicolumn{1}{l}{\textbf{WS@3}} & \multicolumn{1}{l}{\textbf{WS@5}} & \multicolumn{1}{l}{\textbf{WS@10}} & \multicolumn{1}{l}{\textbf{WS@20}} & \multicolumn{1}{|l}{\textbf{AS@1}} & \multicolumn{1}{l}{\textbf{AS@3}} & \multicolumn{1}{l}{\textbf{AS@5}} & \multicolumn{1}{l}{\textbf{AS@10}} & \multicolumn{1}{l}{\textbf{AS@20}} \\ \midrule
\multicolumn{1}{l|}{} & \multicolumn{1}{l|}
{\textbf{DARE}} & 1.13 & 1.07 & 1.06 & 1.15 & 1.48 & 0.81 & {0.44} & 0.26 & 0.05 & 0.00 \\
\multicolumn{1}{l|}{} & \multicolumn{1}{l|}
{\textbf{SFT}} & 1.88 & 1.55 & 1.43 & 1.37 & 1.55 & \multicolumn{1}{|c}{2.19} & 1.66 & 1.32 & 0.66 & 0.09 \\
\multicolumn{1}{l|}{} & \multicolumn{1}{l|}{\textbf{WordKD}} & 2.02 & 1.70 & 1.58 & 1.52 & 1.58 & \multicolumn{1}{|c}{2.33} & 1.89 & 1.60 & 1.03 & 0.16 \\
\multicolumn{1}{l|}{} & \multicolumn{1}{l|}{\textbf{TVDKD}} & 1.84 & 1.55 & 1.44 & 1.37 & 1.57 & \multicolumn{1}{|c}{2.15} & 1.67 & 1.31 & 0.65 & 0.09 \\
\multicolumn{1}{l|}{} & \multicolumn{1}{l|}{\textbf{SeqKD}} & {\ul 2.18} & {\ul 1.88} & {\ul 1.75} & 1.67 & 1.68 & \multicolumn{1}{|c}{{\ul 2.46}} & {\ul 2.10} & {\ul 1.85} & 1.27 & 0.30 \\
\multicolumn{1}{l|}{} & \multicolumn{1}{l|}{\cellcolor{gray!16}\textbf{AtSpeed-S}} & \cellcolor{gray!16}\textbf{2.33} & \cellcolor{gray!16}\textbf{1.97} & \cellcolor{gray!16}\textbf{1.84} & \cellcolor{gray!16}\textbf{1.87} & \cellcolor{gray!16}\textbf{1.84} & \multicolumn{1}{|c}{\cellcolor{gray!16}\textbf{2.58}} & \cellcolor{gray!16}\textbf{2.20} & \cellcolor{gray!16}\textbf{2.00} & \cellcolor{gray!16}\textbf{1.64} & \cellcolor{gray!16}\textbf{0.57} \\
\multicolumn{1}{l|}{\multirow{-7}{*}{\textbf{Strict Top-$\bm{K}$}}} & \multicolumn{1}{l|}{\textbf{AtSpeed-R}} & 1.95 & 1.71 & 1.70 & {\ul 1.71} & {\ul 1.74} & \multicolumn{1}{|c}{2.26} & 1.96 & 1.82 & {\ul 1.33} & {\ul 0.43} \\ \midrule
\multicolumn{1}{l|}{} & \multicolumn{1}{l|}
{\textbf{DARE}} & 1.61 & 1.65 & 1.70 & 1.53 & 1.95  & 2.00 & {2.00} & 1.97 & 1.14 & 1.00 \\
\multicolumn{1}{l|}{} & \multicolumn{1}{l|}{\textbf{SFT}} & 1.77 & 1.76 & 1.80 & 2.06 & 2.36 & \multicolumn{1}{|c}{2.15} & 2.06 & 2.03 & 1.99 & 1.48 \\
\multicolumn{1}{l|}{} & \multicolumn{1}{l|}{\textbf{WordKD}} & 1.74 & 1.75 & 1.81 & 1.99 & 2.05 & \multicolumn{1}{|c}{2.11} & 2.04 & 2.01 & 1.87 & 1.07 \\
\multicolumn{1}{l|}{} & \multicolumn{1}{l|}{\textbf{TVDKD}} & 1.80 & 1.77 & 1.81 & 2.06 & 2.35 & \multicolumn{1}{|c}{2.17} & 2.07 & 2.03 & 1.99 & 1.45 \\
\multicolumn{1}{l|}{} & \multicolumn{1}{l|}{\textbf{SeqKD}} & 1.85 & 1.86 & {\ul 1.90} & 2.11 & 2.31 & \multicolumn{1}{|c}{2.21} & 2.13 & {\ul 2.10} & 2.01 & 1.40 \\
\multicolumn{1}{l|}{} & \multicolumn{1}{l|}{\textbf{AtSpeed-S}} & {\ul 1.94} & {\ul 1.87} & 1.89 & {{\ul 2.12}} & {\textbf{2.51}} & \multicolumn{1}{|c}{{\ul 2.26}} & {\ul 2.14} & 2.09 & {\textbf{2.03}} & {\ul {1.71}} \\
\multicolumn{1}{l|}{\multirow{-7}{*}{\textbf{Relaxed Sampling}}} & \multicolumn{1}{l|}{\cellcolor{gray!16}\textbf{AtSpeed-R}} & \cellcolor{gray!16}\textbf{2.05} & \cellcolor{gray!16}\textbf{1.94} & \cellcolor{gray!16}\textbf{1.94} & \cellcolor{gray!16}{\textbf{2.16}} & {\cellcolor{gray!16}{\ul 2.47}} & \multicolumn{1}{|c}{\cellcolor{gray!16}\textbf{2.37}} & {\cellcolor{gray!16}\textbf{2.19}} & {\cellcolor{gray!16}\textbf{2.13}} & {\cellcolor{gray!16}{\ul 2.01}} & {\cellcolor{gray!16}{\textbf{1.77}}} \\ \midrule\midrule

\multicolumn{12}{c}{\textbf{Games}} \\ \midrule
\multicolumn{1}{l|}{\textbf{Verification}} & \textbf{Method} & \multicolumn{1}{c}{\textbf{WS@1}} & \multicolumn{1}{c}{\textbf{WS@3}} & \multicolumn{1}{c}{\textbf{WS@5}} & \multicolumn{1}{c}{\textbf{WS@10}} & \multicolumn{1}{c}{\textbf{WS@20}} & \multicolumn{1}{|c}
{\textbf{AS@1}} & \multicolumn{1}{c}{\textbf{AS@3}} & \multicolumn{1}{c}{\textbf{AS@5}} & \multicolumn{1}{c}{\textbf{AS@10}} & \multicolumn{1}{c}{\textbf{AS@20}} \\ \midrule
\multicolumn{1}{l|}{} & \multicolumn{1}{l|}
{\textbf{DARE}} & 0.94 & 0.95 & 0.99 & 1.13 & 1.44 & 0.15 & {0.00} & 0.00 & 0.00 & 0.00 \\
\multicolumn{1}{l|}{} & \multicolumn{1}{l|}{\textbf{SFT}} & \multicolumn{1}{c}{1.82} & \multicolumn{1}{c}{1.45} & \multicolumn{1}{c}{1.43} & \multicolumn{1}{c}{1.40} & \multicolumn{1}{c}{1.58} & \multicolumn{1}{|c}{2.13} & \multicolumn{1}{c}{1.49} & \multicolumn{1}{c}{1.32} & \multicolumn{1}{c}{0.91} & \multicolumn{1}{c}{0.15} \\
\multicolumn{1}{l|}{} & \multicolumn{1}{l|}{\textbf{WordKD}} & \multicolumn{1}{c}{1.82} & \multicolumn{1}{c}{1.45} & \multicolumn{1}{c}{1.31} & \multicolumn{1}{c}{1.35} & \multicolumn{1}{c}{1.47} & \multicolumn{1}{|c}{2.13} & \multicolumn{1}{c}{1.49} & \multicolumn{1}{c}{1.10} & \multicolumn{1}{c}{0.80} & \multicolumn{1}{c}{0.14} \\
\multicolumn{1}{l|}{} & \multicolumn{1}{l|}{\textbf{TVDKD}} & \multicolumn{1}{c}{1.70} & \multicolumn{1}{c}{1.33} & \multicolumn{1}{c}{1.24} & \multicolumn{1}{c}{1.32} & \multicolumn{1}{c}{1.50} & \multicolumn{1}{|c}{1.99} & \multicolumn{1}{c}{1.26} & \multicolumn{1}{c}{0.95} & \multicolumn{1}{c}{0.66} & \multicolumn{1}{c}{0.09} \\
\multicolumn{1}{l|}{} & \multicolumn{1}{l|}{\textbf{SeqKD}} & \multicolumn{1}{c}{{\ul 2.01}} & \multicolumn{1}{c}{{\ul 1.72}} & \multicolumn{1}{c}{1.60} & \multicolumn{1}{c}{1.46} & \multicolumn{1}{c}{\textbf{1.77}} & \multicolumn{1}{|c}{\textbf{2.31}} & \multicolumn{1}{c}{1.95} & \multicolumn{1}{c}{1.67} & \multicolumn{1}{c}{1.05} & \multicolumn{1}{c}{\textbf{0.76}} \\
\multicolumn{1}{l|}{} & \multicolumn{1}{l|}{\textbf{\cellcolor{gray!16}AtSpeed-S}} & \multicolumn{1}{c}{\cellcolor{gray!16}\textbf{2.01}} & \multicolumn{1}{c}{\textbf{\cellcolor{gray!16}1.77}} & \multicolumn{1}{c}{\textbf{\cellcolor{gray!16}1.78}} & \multicolumn{1}{c}{\textbf{\cellcolor{gray!16}1.85}} & \multicolumn{1}{c}{{\cellcolor{gray!16}{\ul 1.76}}} & \multicolumn{1}{|c}{{\cellcolor{gray!16}{\ul 2.27}}} & \multicolumn{1}{c}{\textbf{\cellcolor{gray!16}2.02}} & \multicolumn{1}{c}{\cellcolor{gray!16}\textbf{1.96}} & \multicolumn{1}{c}{\cellcolor{gray!16}\textbf{1.69}} & \multicolumn{1}{c}{{\cellcolor{gray!16}{\ul 0.68}}} \\
\multicolumn{1}{l|}{\multirow{-7}{*}{\textbf{Strict Top-$\bm{K}$}}} & \multicolumn{1}{l|}{\textbf{AtSpeed-R}} & \multicolumn{1}{c}{1.85} & \multicolumn{1}{c}{1.71} & \multicolumn{1}{c}{{\ul 1.76}} & \multicolumn{1}{c}{{\ul 1.76}} & \multicolumn{1}{c}{1.60} & \multicolumn{1}{|c}{2.15} & \multicolumn{1}{c}{{\ul 1.98}} & \multicolumn{1}{c}{{\ul 1.95}} & \multicolumn{1}{c}{{\ul 1.53}} & \multicolumn{1}{c}{0.32} \\ \midrule
\multicolumn{1}{l|}{} & \multicolumn{1}{l|}
{\textbf{DARE}} & 1.59 & 1.64 & 1.68 & 1.19 & 1.42  & 2.00 & {2.00} & 1.96 & 0.37 & 0.00 \\
\multicolumn{1}{l|}{} & \multicolumn{1}{l|}{\textbf{SFT}} & 1.96 & 1.85 & 1.84 & 1.97 & {1.69} & \multicolumn{1}{|c}{2.28} & 2.12 & 2.05 & 1.89 & 0.58 \\
\multicolumn{1}{l|}{} & \multicolumn{1}{l|}{\textbf{WordKD}} & 1.84 & 1.77 & 1.78 & 1.84 & {1.56} & \multicolumn{1}{|c}{2.17} & 2.05 & 1.99 & 1.68 & 0.25 \\
\multicolumn{1}{l|}{} & \multicolumn{1}{l|}{\textbf{TVDKD}} & 1.96 & 1.81 & 1.81 & 1.90 & {1.55} & \multicolumn{1}{|c}{2.26} & 2.08 & 2.02 & 1.80 & 0.29 \\
\multicolumn{1}{l|}{} & \multicolumn{1}{l|}{\textbf{SeqKD}} & 1.91 & 1.87 & {1.90} & 2.03 & 2.05 & \multicolumn{1}{|c}{2.22} & 2.13 & {2.10} & {1.98} & 1.22 \\
\multicolumn{1}{l|}{} & \multicolumn{1}{l|}{\textbf{AtSpeed-S}} & \multicolumn{1}{c}{\textbf{2.16}} & \multicolumn{1}{c}{\textbf{1.94}} & \multicolumn{1}{c}{{\ul 1.91}} & \multicolumn{1}{c}{{\ul 2.04}} & \multicolumn{1}{c}{{\ul 2.13}} & \multicolumn{1}{|c}{\textbf{2.44}} & \multicolumn{1}{c}{\textbf{2.19}} & \multicolumn{1}{c}{{\ul 2.10}} & \multicolumn{1}{c}{{\ul 1.98}} & \multicolumn{1}{c}{{\ul 1.28}} \\
\multicolumn{1}{l|}{\multirow{-7}{*}{\textbf{Relaxed Sampling}}} & \multicolumn{1}{l|}{\cellcolor{gray!16}\textbf{AtSpeed-R}} & \multicolumn{1}{c}{\cellcolor{gray!16}{\ul 2.10}} & \multicolumn{1}{c}{\cellcolor{gray!16}{\ul 1.92}} & \multicolumn{1}{c}{\cellcolor{gray!16}\textbf{2.00}} & \multicolumn{1}{c}{\cellcolor{gray!16}\textbf{2.05}} & \multicolumn{1}{c}{\textbf{\cellcolor{gray!16}2.20}} & \multicolumn{1}{|c}{\cellcolor{gray!16}{\ul 2.40}} & \multicolumn{1}{c}{\cellcolor{gray!16}{\ul 2.18}} & \multicolumn{1}{c}{\cellcolor{gray!16}\textbf{2.17}} & \multicolumn{1}{c}{\textbf{\cellcolor{gray!16}1.98}} & \multicolumn{1}{c}{\textbf{\cellcolor{gray!16}1.35}} \\ \bottomrule
\end{tabular}
}}
\vspace{-0.3cm}
\end{table}

\textbf{{Comprehensive Results on Beauty and Games Datasets.}} 
{We report the full acceleration performance comparison on Beauty and Games datasets in Table~\ref{tab:app_full_acceleration_performance} and the full recommendation performance comparison on the two datasets in Table~\ref{tab:app_full_accuracy_performance}, where we can have similar observations as in Table~\ref{tab:overall_performance_beauty}.}

\begin{table}[t]
\setlength{\abovecaptionskip}{0.05cm}
\setlength{\belowcaptionskip}{0cm}
\caption{{Full results of ranking performance in terms of Recall and NDCG between the baselines and AtSpeed instantiated on LC-Rec (LLaMA-7B) on Beauty and Games datasets.}}
\label{tab:app_full_accuracy_performance}
\setlength{\tabcolsep}{5mm}{
\resizebox{\textwidth}{!}{
\begin{tabular}{l|l|cccc}
\toprule
\multicolumn{6}{c}{\textbf{Beauty}} \\ \midrule
\multicolumn{1}{l|}{} & \multicolumn{1}{l|}{Method} & \textbf{Recall@5} & \textbf{Recall@10} & \textbf{NDCG@5} & \textbf{NDCG@10} \\ \midrule
\multicolumn{1}{l|}{\multirow{2}{*}{\textbf{Without SD}}} & \multicolumn{1}{l|}{Target LLM (Top$K$)} & 0.0056 & 0.0098 & 0.0051 & 0.0066 \\
\multicolumn{1}{l|}{} & \multicolumn{1}{l|}{Target LLM (Sampling)} & 0.0056 & 0.0082 & 0.0043 & 0.0066 \\ \midrule
\multicolumn{1}{l|}{\multirow{7}{*}{\textbf{Relaxed Sampling Verification}}} & DARE & 0.0059 & 0.0102 & 0.0044 & 0.0060 \\ 
& \multicolumn{1}{l|}{SFT} & 0.0057 & 0.0091 & 0.0041 & 0.0063 \\
\multicolumn{1}{l|}{} & \multicolumn{1}{l|}{WordKD} & 0.0066 & 0.0105 & 0.0043 & 0.0058 \\
\multicolumn{1}{l|}{} & \multicolumn{1}{l|}{TVDKD} & 0.0057 & 0.0083 & 0.0045 & 0.0054 \\
\multicolumn{1}{l|}{} & \multicolumn{1}{l|}{SeqKD} & 0.0055 & 0.0116 & 0.0045 & 0.0067 \\
\multicolumn{1}{l|}{} & \multicolumn{1}{l|}{\textbf{\cellcolor[HTML]{ECF4FF}AtSpeed-S}} & \textbf{\cellcolor[HTML]{ECF4FF}0.0060} & \textbf{\cellcolor[HTML]{ECF4FF}0.0096} & \textbf{\cellcolor[HTML]{ECF4FF}0.0046} & \textbf{\cellcolor[HTML]{ECF4FF}0.0060} \\
\multicolumn{1}{l|}{} & \multicolumn{1}{l|}{\textbf{\cellcolor[HTML]{ECF4FF}AtSpeed-R}} & \textbf{\cellcolor[HTML]{ECF4FF}0.0058} & \textbf{\cellcolor[HTML]{ECF4FF}0.0092} & \textbf{\cellcolor[HTML]{ECF4FF}0.0049} & \textbf{\cellcolor{bluu}0.0063} \\
\multicolumn{1}{l|}{} & \multicolumn{1}{l|}{{Average}} & {0.0059} & {0.0098} & {0.0045} & {0.0061} \\ 
\midrule \midrule

\multicolumn{6}{c}{\textbf{Games}} \\ \midrule
\multicolumn{1}{l|}{} & \multicolumn{1}{l|}{Method} & \textbf{Recall@5} & \textbf{Recall@10} & \textbf{NDCG@5} & \textbf{NDCG@10} \\ \midrule
\multicolumn{1}{l|}{\multirow{2}{*}{\textbf{Without SD}}} & \multicolumn{1}{l|}{Target LLM (Top$K$)} & 0.0074 & 0.0125 & 0.0065 & 0.0083 \\
\multicolumn{1}{l|}{} & \multicolumn{1}{l|}{Target LLM (Sampling)} & 0.0075 & 0.0115 & 0.0066 & 0.0079 \\ \midrule
\multicolumn{1}{l|}{\multirow{7}{*}{\textbf{Relaxed Sampling Verification}}} & DARE & 0.0076 & 0.0119 & 0.0065 & 0.0080 \\ 
~ & SFT & 0.0073 & 0.0112 & 0.0060 & 0.0074 \\ 
~ & WordKD & 0.0072 & 0.0113 & 0.0058 & 0.0073 \\ 
~ & TVDKD & 0.0069 & 0.0108 & 0.0061 & 0.0074 \\ 
~ & SeqKD & 0.0071 & 0.0110 & 0.0059 & 0.0073 \\
\multicolumn{1}{l|}{} & \multicolumn{1}{l|}{\textbf{\cellcolor[HTML]{ECF4FF}AtSpeed-S}} & \textbf{\cellcolor[HTML]{ECF4FF}0.0080} & \textbf{\cellcolor[HTML]{ECF4FF}0.0131} & \textbf{\cellcolor[HTML]{ECF4FF}0.0068} & \textbf{\cellcolor[HTML]{ECF4FF}0.0085} \\
\multicolumn{1}{l|}{} & \multicolumn{1}{l|}{\textbf{\cellcolor[HTML]{ECF4FF}AtSpeed-R}} & \textbf{\cellcolor[HTML]{ECF4FF}0.0076} & \textbf{\cellcolor[HTML]{ECF4FF}0.0123} & \textbf{\cellcolor[HTML]{ECF4FF}0.0063} & \textbf{\cellcolor{bluu}0.0080} \\
\multicolumn{1}{l|}{} & \multicolumn{1}{l|}{{Average}} & 0.0074 & 0.0117 & 0.0062 & 0.0077 \\ 

\bottomrule
\end{tabular}
}}
\end{table}

\textbf{{Detailed Analysis on Recommendation Accuracy.}} 
Based on the results in Table~\ref{tab:app_full_accuracy_performance}, we have following observations: {
1) \textbf{\textit{The ranking performance under strict top-$K$ verification is lossless.}} This is expected since strict verification only accepts the drafts that perfectly match the top-$K$ sequence from the target LLM. Therefore, we obtain identical generation results with and without speculative decoding under strict verification. Based on lossless results, our proposed method AtSpeed-S achieves up to an average of 1.85$\times$ speedup.} 
{
2) \textbf{\textit{The ranking performance under relaxed sampling verification across different alignment methods only shows limited performance drops compared to the target LLM’s top-$K$ results}} (comparable performance on AtSpeed-S, AtSpeed-R and ``Average'' line), which is consistent with the results in Table~\ref{tab:overall_performance_beauty}. This also meets our expectations since the sampling-based verification ensures the approximately equivalent distribution between the SD output and target LLM output under sampling-based generation. We calculate the average over all methods for comparison because we care about how relaxed sampling verification affects the recommendation accuracy. In other words, baseline draft models are also expected to show limited ranking performance drop even if they are less aligned with the target LLM and have a relatively low speedup (\eg SFT on Beauty). 
}
{
3) \textbf{\textit{Compared to NDCG, the Recall under relaxed sampling verification usually achieves comparable or even better values than that of the target LLM.}} This is reasonable since this work aims to align the top-$K$ sequence distribution between the draft model and the target LLM. We emphasize the top-$K$ drafted sequence to be accepted with a higher acceptance rate (\ie a high recall of top$K$ sequences), which does not explicitly require the draft model to distinguish the ranking between top-$K$ sequences (potentially lead to relatively limited performances in terms of NDCG). Nonetheless, it is worth pursuing the non-trivial explicit probability ordering during alignment, which we consider leaving for further exploration in future work. 
}

\begin{table}[t]
\caption{Performance comparison between AtSpeed and baselines on MovieLens-1M dataset. 
For each verification strategy, the best
results are highlighted in bold and the second-best results are underlined.
}
\label{tab:performance_movielens}
\setlength{\tabcolsep}{4mm}{
\resizebox{\textwidth}{!}{
\begin{tabular}{l|l|cccc|cccc}
\toprule
\multicolumn{10}{c}{\textbf{MovieLens-1M}} \\ \midrule
\textbf{Verification} & \textbf{Method} & \textbf{WS@3} & \textbf{WS@5} & \textbf{WS@10} & \textbf{WS@20} & \textbf{AS@3} & \textbf{AS@5} & \textbf{AS@10} & \textbf{AS@20} \\
\midrule
\multicolumn{1}{l|}{\multirow{7}{*}{\textbf{Strict Top-$\bm{K}$}}} & \textbf{DARE} & 1.26 & 1.28 & 1.39 & 1.72 & 1.00 & 1.00 & 1.00 & 1.00 \\ 
~ & \textbf{SFT} & 1.65 & 1.35 & 1.48 & 1.76 & 1.99 & 1.26 & 1.14 & 1.03 \\ 
~ & \textbf{WordKD} & 1.29 & 1.29 & 1.39 & 1.69 & 1.07 & 1.02 & 1.00 & 1.00 \\ 
~ & \textbf{TVDKD} & 1.73 & 1.72 & 1.25 & 1.24 & 1.98 & 1.90 & 1.04 & 1.00 \\ 
~ & \textbf{SeqKD} & {\ul 1.77} & 1.78 & 1.42 & 1.50 & {\ul 2.03} & {\ul 2.00} & 1.24 & 1.05 \\ 
~ & \cellcolor{gray!16}\textbf{AtSpeed-S} & \cellcolor{gray!16}\textbf{1.86} & \cellcolor{gray!16}\textbf{1.79} & \cellcolor{gray!16}\textbf{1.80} & \cellcolor{gray!16}\textbf{1.75} & \cellcolor{gray!16}\textbf{2.08} & \cellcolor{gray!16}\textbf{2.03} & \cellcolor{gray!16}\textbf{1.98} & \cellcolor{gray!16}\textbf{1.09} \\ 
~ & \textbf{AtSpeed-R} & 1.76 & {\ul 1.78} & {\ul 1.54} & {\ul 1.74} & 2.01 & 1.99 & {\ul 1.29} & {\ul 1.08} \\ 
\midrule
\multicolumn{1}{l|}{\multirow{7}{*}{\textbf{Relaxed Sampling}}} & \textbf{DARE} & 2.01 & 1.84 & 1.35 & 1.44 & 2.16 & 2.02 & 1.00 & 0.35 \\ 
~ & \textbf{SFT} & 2.08 & 2.03 & 2.02 & 1.61 & 2.28 & 2.20 & 2.06 & 0.93 \\ 
~ & \textbf{WordKD} & 1.97 & 1.87 & 1.48 & 1.28 & 2.15 & 2.08 & 1.23 & 0.00 \\ 
~ & \textbf{TVDKD} & 2.00 & 1.98 & 1.68 & 1.29 & 2.30 & 2.21 & 1.59 & 0.00 \\ 
~ & \textbf{SeqKD} & 2.13 & 2.08 & 1.93 & 1.56 & 2.19 & 2.14 & 2.01 & 0.78 \\ 
~ & \textbf{AtSpeed-S} & {\ul 2.23} & {\ul 2.16} & {\ul 2.11} & {\ul 1.65} & {\ul 2.41} & {\ul 2.33} & \textbf{2.16} & \textbf{1.02} \\ 
~ & \cellcolor{gray!16}\textbf{AtSpeed-R} & \cellcolor{gray!16}\textbf{2.24} & \cellcolor{gray!16}\textbf{2.22} & \cellcolor{gray!16}\textbf{2.14} & \cellcolor{gray!16}\textbf{1.64} & \cellcolor{gray!16}\textbf{2.44} & \cellcolor{gray!16}\textbf{2.38} & \cellcolor{gray!16}{\ul 2.15} & \cellcolor{gray!16}{\ul 0.95} \\
\midrule \midrule

\multicolumn{10}{c}{\textbf{Goodreads}} \\ \midrule
\textbf{Verification} & \textbf{Method} & \textbf{WS@3} & \textbf{WS@5} & \textbf{WS@10} & \textbf{WS@20} & \textbf{AS@3} & \textbf{AS@5} & \textbf{AS@10} & \textbf{AS@20} \\
\midrule
\multicolumn{1}{l|}{\multirow{7}{*}{\textbf{Strict Top-$\bm{K}$}}} & \textbf{DARE} & 1.30 & 1.32 & 1.44 & 1.75 & 1.00 & 1.00 & 1.00 & 1.00 \\ 
~ & \textbf{SFT} & 1.83 & 1.81 & 2.17 & 2.46 & 2.04 & 1.98 & 1.72 & 1.07 \\ 
~ & \textbf{WordKD} & 1.83 & 1.92 & 2.07 & 2.38 & 2.00 & 1.96 & 1.58 & 1.00 \\ 
~ & \textbf{TVDKD} & 1.89 & 1.93 & 2.17 & 2.46 & 2.07 & 1.97 & 1.70 & 1.07 \\ 
~ & \textbf{SeqKD} & 1.82 & 1.89 & 2.19 & 2.48 & 2.00 & 1.96 & 1.73 & 1.08 \\ 
~ & \cellcolor{gray!16}\textbf{AtSpeed-S} & \cellcolor{gray!16}\textbf{2.25} & \cellcolor{gray!16}\textbf{2.26} & \cellcolor{gray!16}\textbf{2.20} & \cellcolor{gray!16}{\ul 2.48} & \cellcolor{gray!16}\textbf{2.32} & \cellcolor{gray!16}\textbf{2.18} & \cellcolor{gray!16}\textbf{1.81} & \cellcolor{gray!16}{\ul 1.08} \\ 
~ & \textbf{AtSpeed-R} & {\ul 2.11} & {\ul 2.07} & {\ul 2.20} & \textbf{2.49} & {\ul 2.24} & {\ul 2.09} & {\ul 1.80} & \textbf{1.12} \\ \midrule
\multicolumn{1}{l|}{\multirow{7}{*}{\textbf{Relaxed Sampling}}} & \textbf{DARE} & 1.84 & 1.83 & 1.35 & 1.43 & 2.06 & 2.02 & 1.00 & 0.35 \\ 
~ & \textbf{SFT} & 2.15 & 2.09 & 1.70 & 1.91 & 2.27 & 2.08 & 1.01 & 0.10 \\ 
~ & \textbf{WordKD} & 2.01 & 2.04 & 1.68 & 1.92 & 2.15 & 2.05 & 1.00 & 0.15 \\ 
~ & \textbf{TVDKD} & {\ul 2.27} & {\ul 2.22} & 1.71 & {\ul 2.02} & {\ul 2.36} & {\ul 2.18} & {\ul 1.03} & {\ul 0.25} \\ 
~ & \textbf{SeqKD} & 1.90 & 1.96 & 1.66 & 1.85 & 2.08 & 2.01 & 1.00 & 0.02 \\ 
~ & \textbf{AtSpeed-S} & 2.18 & 2.13 & {\ul 1.71} & 1.93 & 2.28 & 2.12 & 1.02 & 0.17 \\ 
~ & \cellcolor{gray!16}\textbf{AtSpeed-R} & \cellcolor{gray!16}\textbf{2.45} & \cellcolor{gray!16}\textbf{2.39} & \cellcolor{gray!16}\textbf{1.77} & \cellcolor{gray!16}\textbf{2.36} & \cellcolor{gray!16}\textbf{2.50} & \cellcolor{gray!16}\textbf{2.32} & \cellcolor{gray!16}\textbf{1.10} & \cellcolor{gray!16}\textbf{0.87}  \\ 
\bottomrule
\end{tabular}
}}
\end{table}

\textbf{{Additional Results on MovieLens-1M\footnote{\url{https://grouplens.org/datasets/movielens/1m/}.} and Goodreads Datasets\footnote{\url{https://www.kaggle.com/datasets/bahramjannesarr/goodreads-book-datasets-10m}.}. }}
{To evaluate the generalization ability of AtSpeed across different domains, we compare our proposed method with baselines on the MovieLens-1M and Goodreads datasets. 
The results are presented in Table~\ref{tab:performance_movielens}. 
From the results, we can observe that 1) AtSpeed-S and AtSpeed-R outperforms baselines in most cases under strict top-$K$ verification and relaxed sampling verification, respectively. This validate the effectiveness of our proposed method on diverse datasets and is consistent with the observations on Amazon Beauty and Games (see Table~\ref{tab:overall_performance_beauty}). 
2) The relaxed sampling verification generally shows superior speedup compared to strict top-$K$ verification when $K=3,5$, while yields inferior speedup when $K$ is large (\eg $K=20$ on MovieLens-1M and Goodreads). One possible reason is that the item size is relatively small on the two datasets ($3,017$ movies and $4,667$ books) compared to Beauty ($12,035$ products) and Games ($17,332$ products), which might results in 
long-tailed draft distribution, where top$K$ valid sequences have overwhelmingly high probabilities (\ie $q\ge p$), thus leading to a high rejection probabilities. 
}

\textbf{Speedup of Tree-based Attention.} 
To analyze the speedup effect of tree-based attention, we instantiate it on the standard LLM decoding with beam search, where each decoding step will utilize the tree-based attention in the LLM forward process. 
We compare the beam search with and without tree-based attention and report the results in Figure~\ref{fig:tree_attn_speedup} (a). 
As shown in the figure, we can find that the utilization of tree-based attention indeed exhibits an acceleration effect, 
corroborating the efficacy of our method. 
Furthermore, as the draft beam size $N$ grows larger, the speedup of tree-based attention becomes higher. This enhancement is attributed to the substantial reduction in repeated calculations of the same prefix shared across different beam sequences. 

\begin{figure}[h]
\setlength{\abovecaptionskip}{0cm}
\setlength{\belowcaptionskip}{0cm}
  \centering 
  \subfigure{
    \includegraphics[height=1.4in]{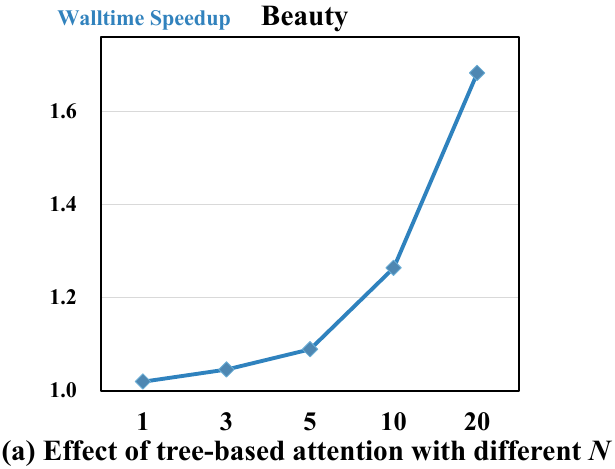}} 
  \subfigure{
    \includegraphics[height=1.4in]{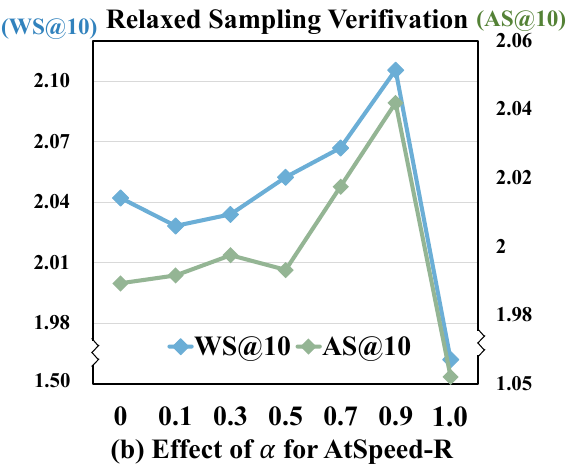}} 
  \subfigure{
    \includegraphics[height=1.4in]{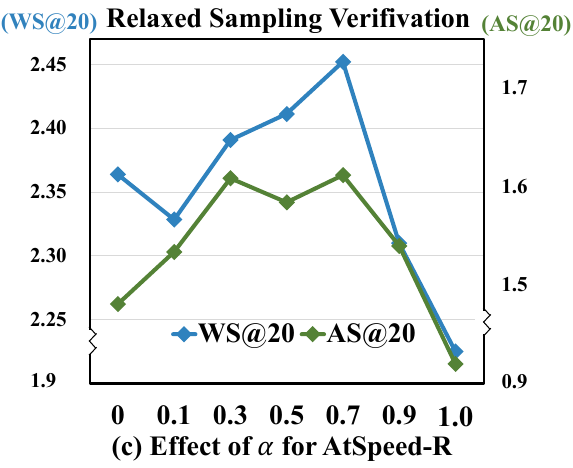}}
  \caption{(a) Speedup effect of tree-based attention on Beauty. (b) Performance of AtSpeed on varying hyper-parameters (alignment strength $\alpha$) on Beauty.}
  \label{fig:tree_attn_speedup}
\end{figure}

\textbf{Utilization of Beam Sequences from All Steps.}
Table~\ref{tab:training_trade-off} shows 
inference and training efficiency comparison between AtSpeed training with and without additional data 
generated from beam search results on diverse steps.  
From the table, it is observed that incorporating data from diverse generating steps usually enhances the speedup and accept steps, which is attributed to the additional alignment of top-$K$ sequences across different steps. 
Nevertheless, using additional data could incur increased training time and storage cost. Specifically, utilizing a larger dataset necessitates greater storage capacity and extends the training duration. 
Despite these additional costs, the marginal improvement in performance is not substantial, leading to a low cost-effectiveness ratio. 
Therefore, we should carefully consider the balance between the benefits of enhanced alignment against the increased training resource burden.

\begin{table}[h]
\caption{Comparison between AtSpeed with and without additional training data regarding decoding efficiency, training time efficiency, and storage efficiency. ``w/ AD'' denotes AtSpeed trains the draft model with additional data generated from different steps.}
\label{tab:training_trade-off}
\setlength{\tabcolsep}{1mm}{
\resizebox{\textwidth}{!}{
\begin{tabular}{@{}l|l|cc|cc|c|c@{}}
\toprule
\multicolumn{1}{l|}{\textbf{Verification}} & \textbf{Method}    & \textbf{WS@10}$\uparrow$ & \multicolumn{1}{c|}{\textbf{WS@20}$\uparrow$} & \multicolumn{1}{c}{\textbf{AS@10}$\uparrow$} & \multicolumn{1}{c|}{\textbf{AS@20}$\uparrow$} & \multicolumn{1}{c|}{\textbf{Time Cost (h)}$\downarrow$} & \multicolumn{1}{l}{\textbf{Storage Cost (GB)}$\downarrow$} \\ \midrule
\multirow{2}{*}{\textbf{Strict Top-$\bm{K}$}}     & \textbf{AtSpeed-S} & 1.87         & 1.84                            & 1.64                          & 0.57                           & 2.56                                & 14.68                                 \\
                                  & \cellcolor{gray!16}\textbf{AtSpeed-S w/ MD}  & \cellcolor{gray!16}\textcolor{gray!16}{666}1.87$^{\textcolor{red}{\uparrow0.00}}$         & \cellcolor{gray!16}\textcolor{gray!16}{666}1.66$^{\textcolor{lightblue}{\downarrow0.18}}$                                & \cellcolor{gray!16}\textcolor{gray!16}{666}1.60$^{\textcolor{lightblue}{\downarrow0.04}}$                          & \cellcolor{gray!16}\textcolor{gray!16}{666}0.65$^{\textcolor{red}{\uparrow0.08}}$                           & \cellcolor{gray!16}\textcolor{gray!16}{666}4.10$^{\textcolor{lightblue}{\uparrow1.54}}$                                & \cellcolor{gray!16}\textcolor{gray!16}{6666}28.09$^{\textcolor{lightblue}{\uparrow13.41}}$                                 \\ \midrule
\multirow{2}{*}{\textbf{Relaxed Sampling}} & \textbf{AtSpeed-R} & 2.09         & 2.44                              & 2.01                          & 1.61                           & 2.63                                & 14.68                                 \\
                                  & \cellcolor{gray!16}\textbf{AtSpeed-R w/ MD}  & \cellcolor{gray!16}\textcolor{gray!16}{666}2.16$^{\textcolor{red}{\uparrow0.07}}$         & \cellcolor{gray!16}\textcolor{gray!16}{666}2.42$^{\textcolor{lightblue}{\downarrow0.02}}$                              & \cellcolor{gray!16}\textcolor{gray!16}{666}2.08$^{\textcolor{red}{\uparrow0.07}}$                          & \cellcolor{gray!16}\textcolor{gray!16}{666}1.65$^{\textcolor{red}{\uparrow0.04}}$                           & \cellcolor{gray!16}\textcolor{gray!16}{666}3.00$^{\textcolor{lightblue}{\uparrow0.37}}$                                & \cellcolor{gray!16}\textcolor{gray!16}{6666}28.09$^{\textcolor{lightblue}{\uparrow13.41}}$                                 \\ \bottomrule
\end{tabular}
}}
\end{table}

\textbf{Hyper-parameter Analysis.} 
\textbf{1) Effect of alignment strength $\alpha$.}
Figure~\ref{fig:tree_attn_speedup}(b-c) presents the results indicating the impact of alignment strength on AtSpeed-R. 
{
As the value of $\alpha$ increases, the overall trend of accept steps exhibits an upward trajectory from $0.1$ to $0.9$, which validates the necessity of alignment to improve the performance. 
Nevertheless, compared to AtSpeed-S (Figure~\ref{fig:hp_N_S-R_and_alpha-S}(b-c)), we can find that only extremely large $\alpha$ (\eg 1) would hurt the performance of AtSpeed-R while setting $\alpha$ to 0.7 for AtSpeed-S already cause a performance drop. 
We suspect the different behaviors are due to the different scales between $L_\text{Align-S}$ and $L_\text{Align-R}$. 
Specifically, given the sequence distribution from draft model $q$ and that from target LLM $p$, we have RKLD $q\log \frac{q}{p}$ in AtSpeed-S and TVD $\frac{|q-p|}{2}$ in AtSpeed-R to align the two models. 
When sequence probability $q$ becomes small as the sequence length increases, $\frac{q}{p}$ becomes very large. 
It is possible for AtSpeed-S to give a larger loss value compared to AtSpeed-R (\ie $\frac{|p-q|}{2}$). 
Therefore, for the same $\alpha<1$, the strength that is too large for AtSpeed-S might still benefit alignment for AtSpeed-R.}

\textbf{2) Effect of draft length $\gamma$.}
Figure~\ref{fig:hp_gamma} presents the impact of varying $\gamma$ on the performance of AtSpeed. Despite acceptance generally increases with $\gamma$, speedup does not consistently improve with $\gamma$. The trend of speedup in relation to $\gamma$ is influenced by draft beam size parameter $N$. Specifically, when $N$ is small, speedup tends to increase with $\gamma$; conversely, when $N$ is large, speedup tends to decrease with $\gamma$. This variation is primarily attributed to the computational overhead during the draft phase, which escalates with larger beam size, thereby diminishing the acceleration benefits. The key determinant of the draft phase's computational cost is the constraint search process.

\begin{figure}[t]
\setlength{\abovecaptionskip}{0cm}
\setlength{\belowcaptionskip}{0cm}
  \centering
  \subfigure{
    \includegraphics[height=1.12in,width=0.25\textwidth]{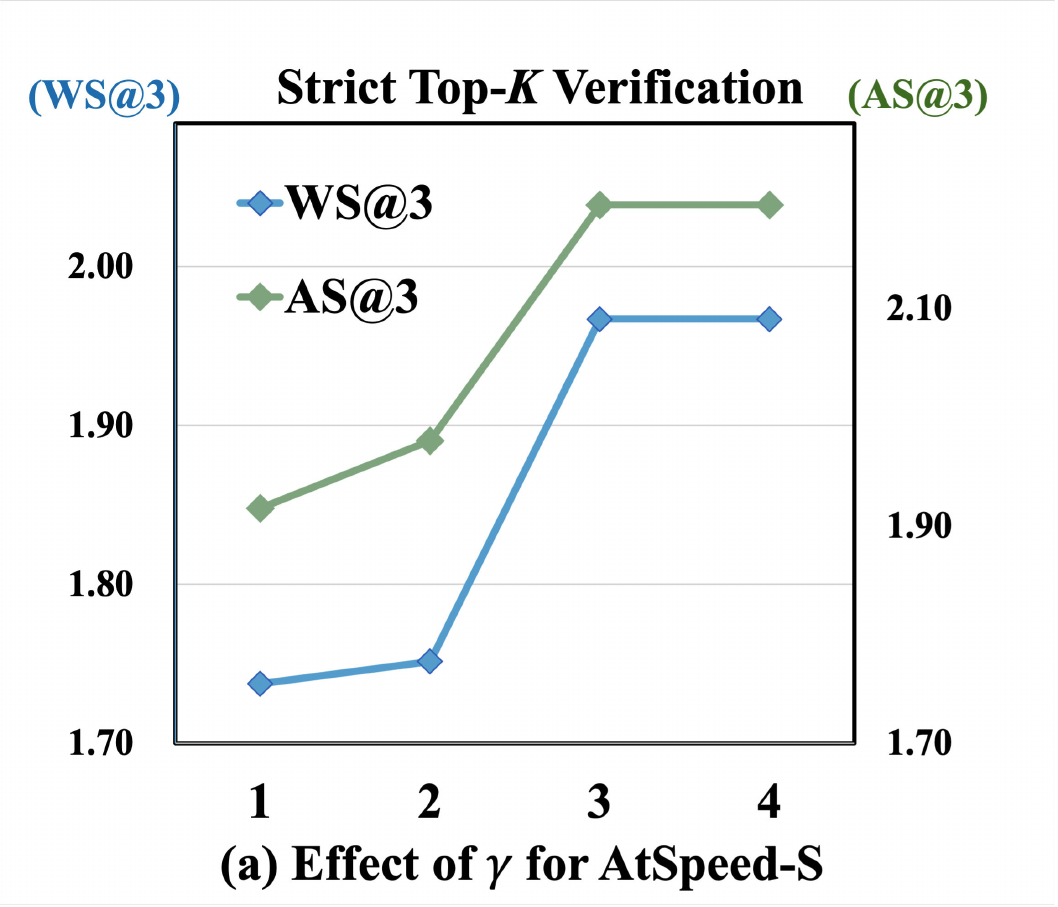}} 
  \hspace{-0.15in}
  \subfigure{
    \includegraphics[height=1.12in,width=0.25\textwidth]{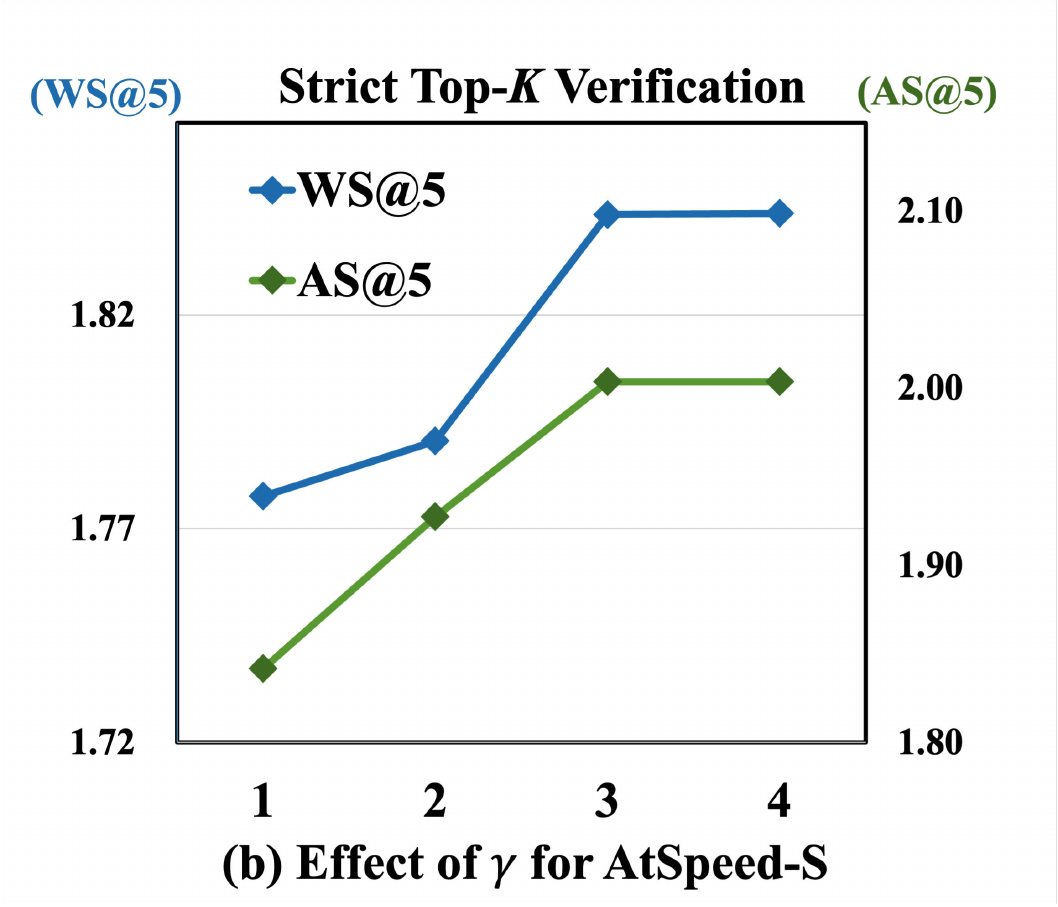}} 
  \hspace{-0.15in} 
  \subfigure{
    \includegraphics[height=1.12in,width=0.25\textwidth]{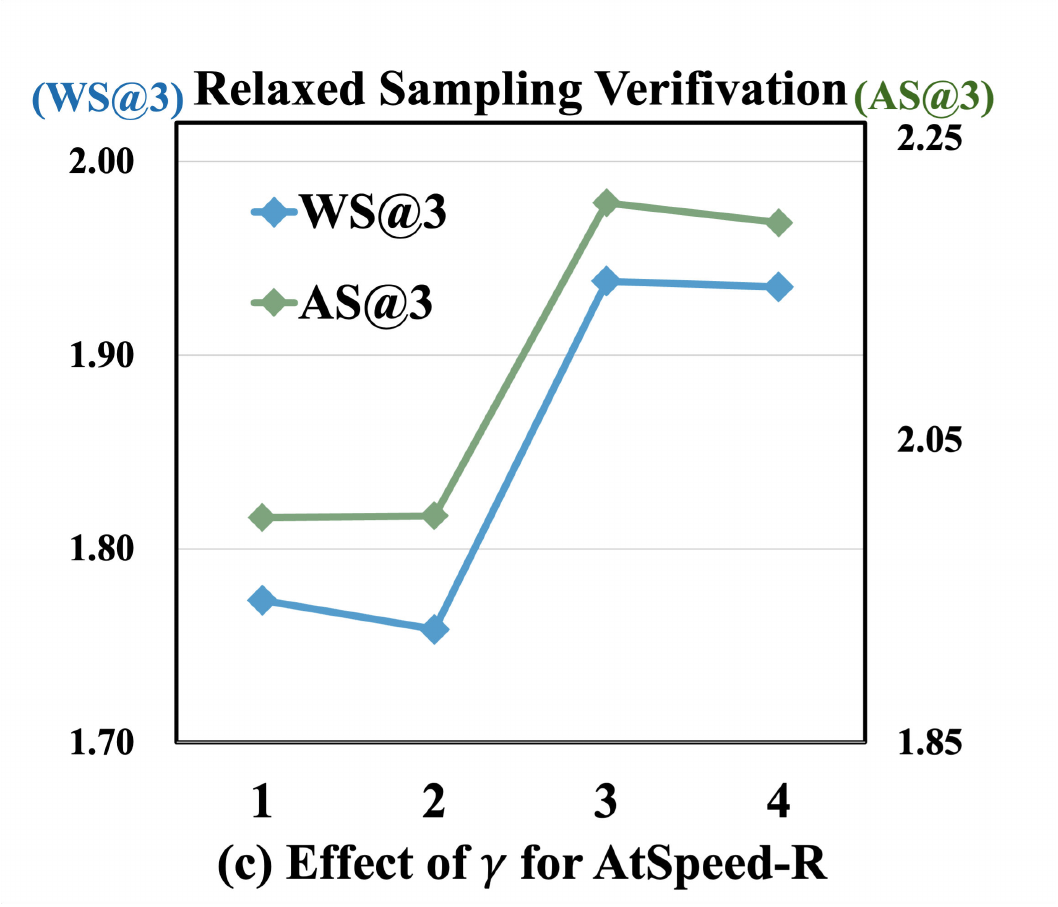}}  
  \hspace{-0.1in}
  \subfigure{
    \includegraphics[height=1.12in,width=0.25\textwidth]{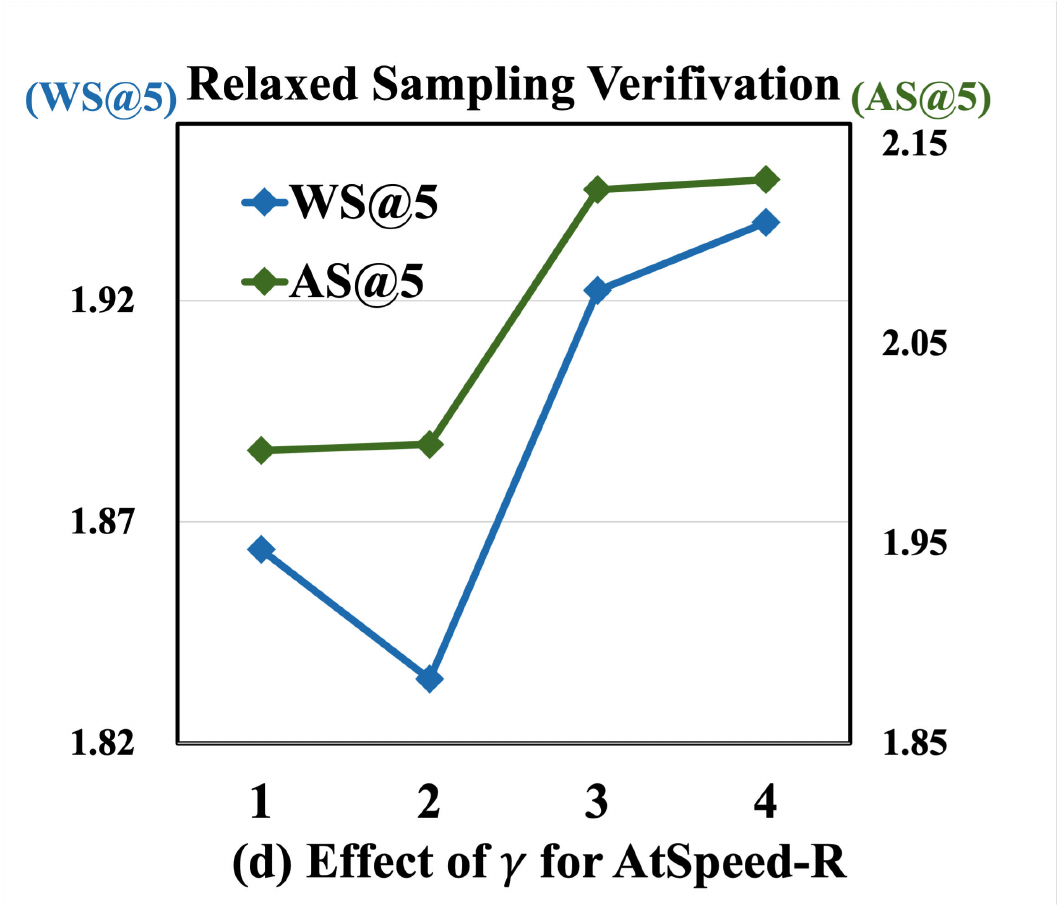}} 
  \subfigure{
    \includegraphics[height=1.12in,width=0.25\textwidth]{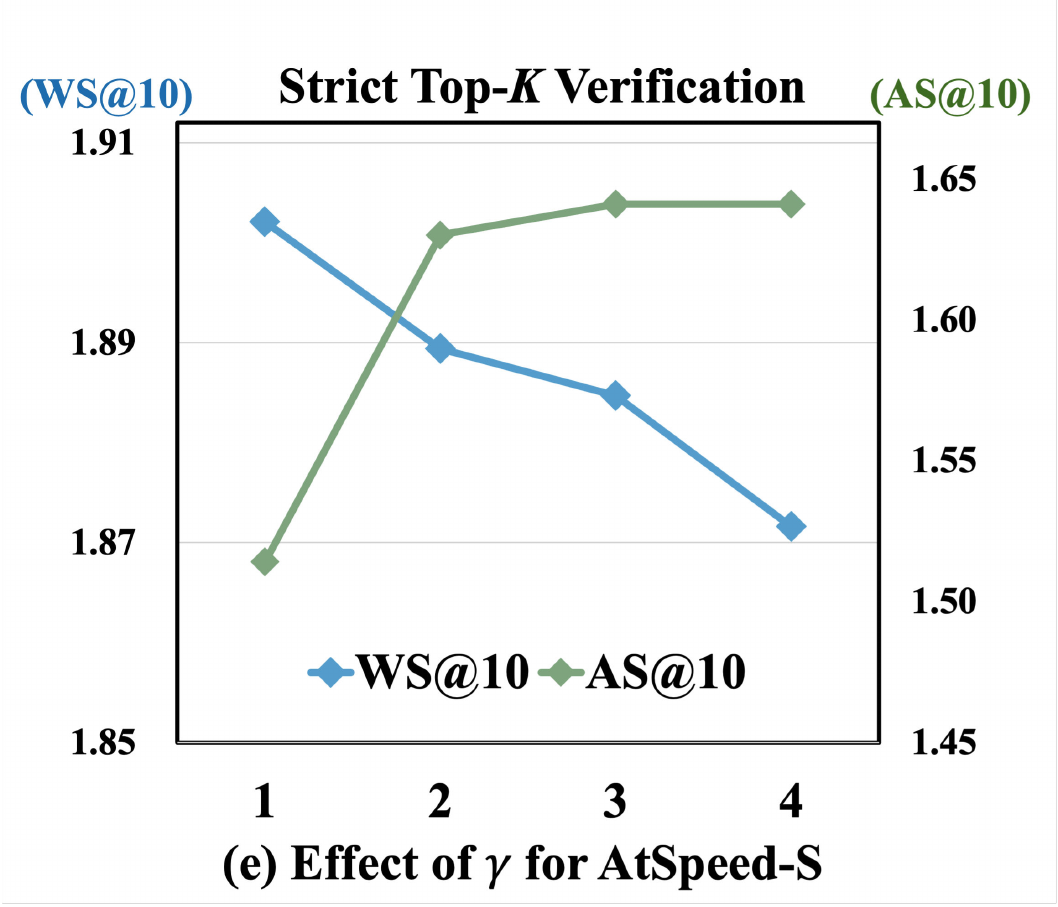}} 
  \hspace{-0.15in}
  \subfigure{
    \includegraphics[height=1.12in,width=0.25\textwidth]{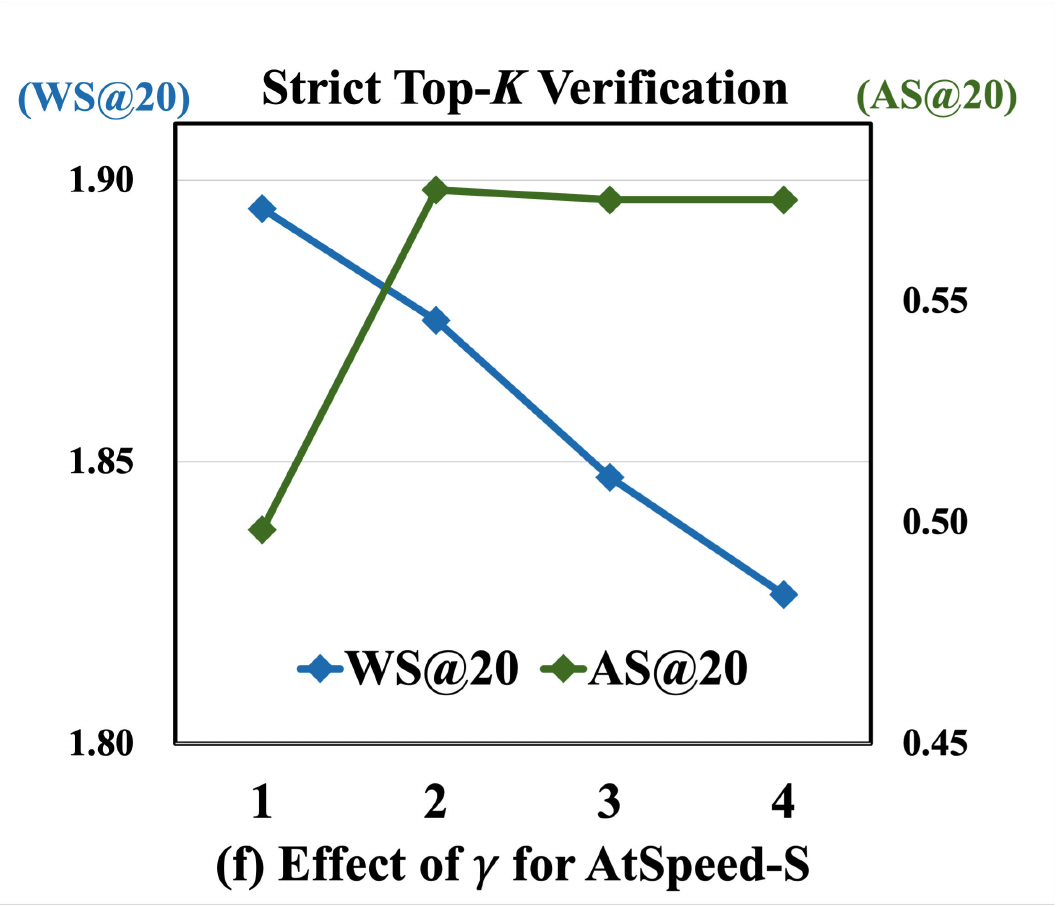}} 
  \hspace{-0.1in} 
  \subfigure{
    \includegraphics[height=1.12in,width=0.25\textwidth]{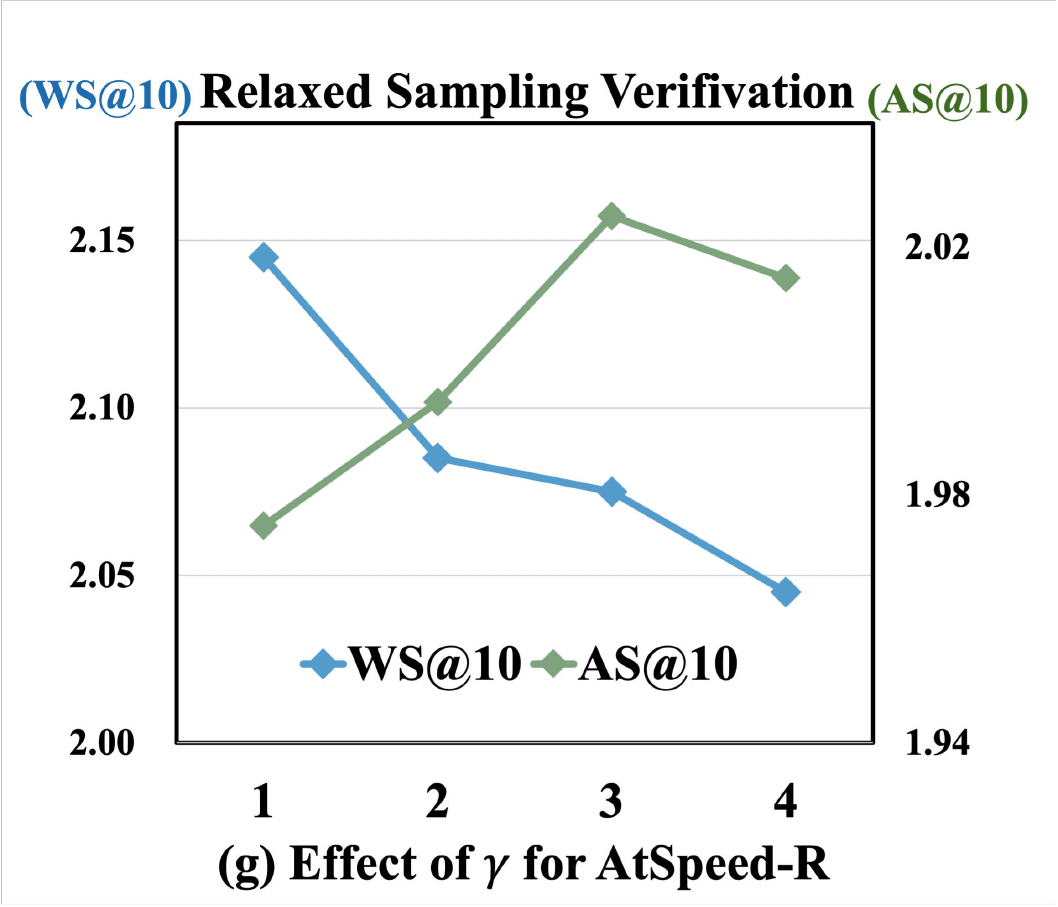}}  
  \hspace{-0.1in}
  \subfigure{
    \includegraphics[height=1.12in,width=0.25\textwidth]{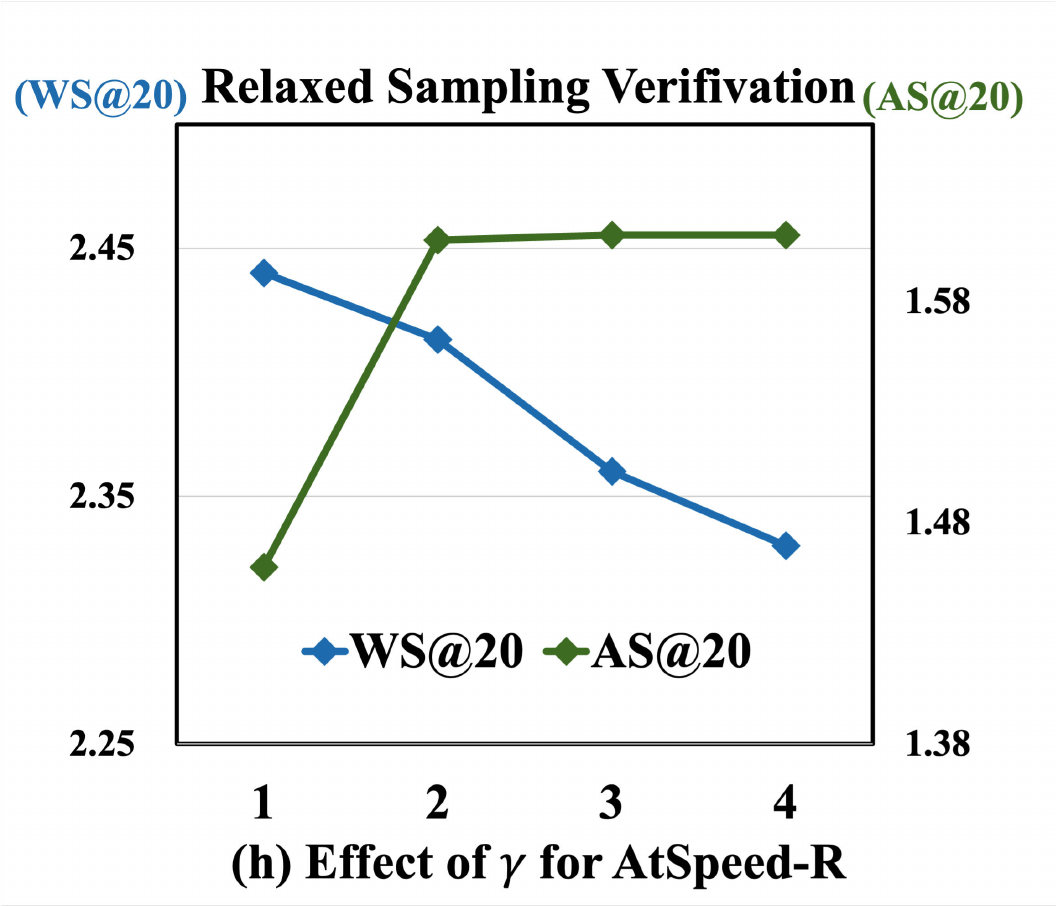}} 
  \caption{Performance of AtSpeed with different draft length $\gamma$ on Beauty.}
  \label{fig:hp_gamma}
\end{figure}

\textbf{Case Study.} 
As shown in Figure~\ref{fig:prefilling}, our proposed methodology achieves a significant reduction in decoding time. 
Since the prefilling time is the same between target LLM with and without SD, AtSpeed boosts the LLM decoding by 
reducing the percentage of decoding by 58\% ($N$=1), 37\% ($N$=10), and 23\% ($N$=20). 

\begin{figure}[t]
\setlength{\abovecaptionskip}{0cm}
\setlength{\belowcaptionskip}{0cm}
  \centering 
  \subfigure{
    \includegraphics[height=1.8in]{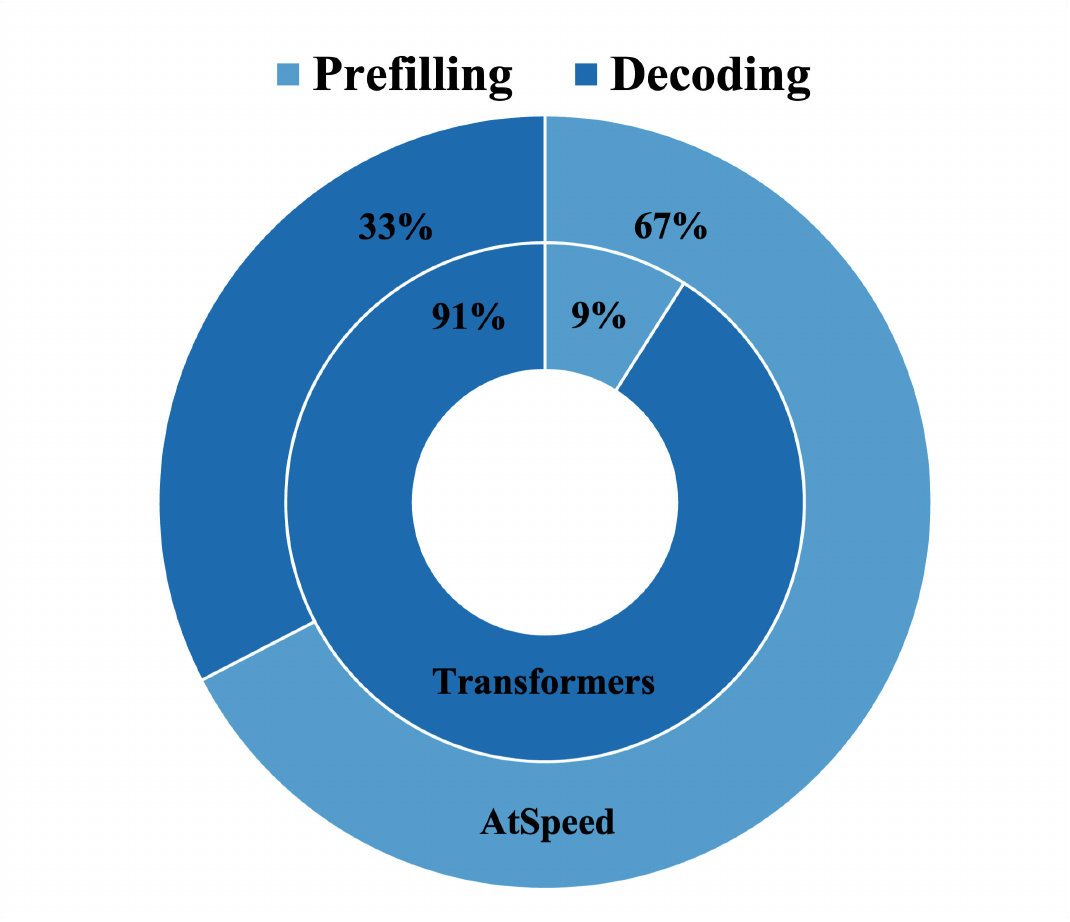}} 
  \subfigure{
    \includegraphics[height=1.8in]{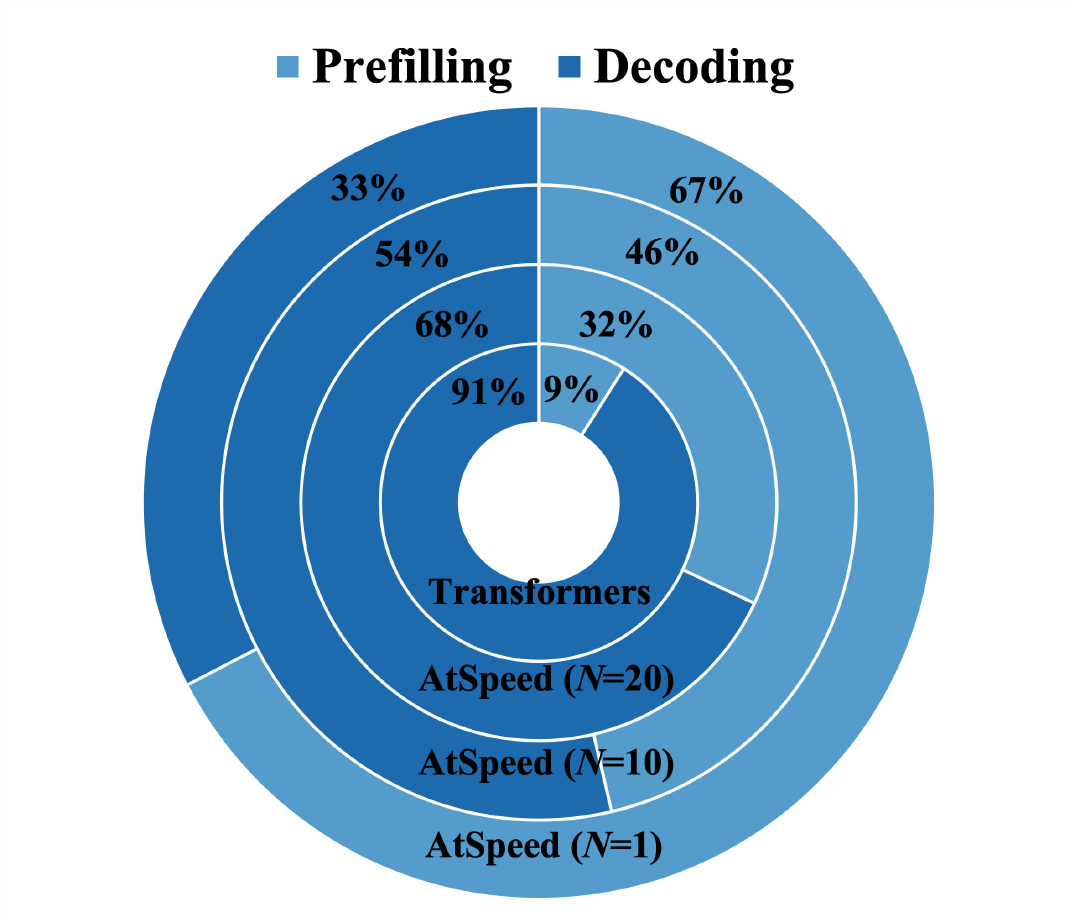}} 
  \caption{Comparison of the time cost ratio between the prefilling and decoding. The time cost of prefilling is the same across different methods.}
  \label{fig:prefilling}
\end{figure}

\end{document}